\documentclass[11pt]{article}
 
\usepackage[utf8]{inputenc}
\usepackage{authblk}

\usepackage{fullpage}

\usepackage{amsthm}
\usepackage{amsmath}
\usepackage{amssymb}
\usepackage{amsfonts}
\usepackage{mathtools}
\usepackage{enumitem}
\usepackage[pagebackref]{hyperref}
\usepackage[sort,numbers]{natbib}

\usepackage{tabularx}
\usepackage{multirow}
\usepackage{booktabs}
\usepackage{makecell}

\usepackage{algorithm2e}
\DontPrintSemicolon

\usepackage{tikz}

\usepackage[capitalize]{cleveref}

\usepackage[linecolor=green!70!black, backgroundcolor=green!10, bordercolor=black, textsize=scriptsize, obeyFinal
]{todonotes}

\newtheorem{theorem}{Theorem}
\newtheorem{lemma}[theorem]{Lemma}

\newtheorem{corollary}[theorem]{Corollary}

\theoremstyle{definition}
\newtheorem{definition}[theorem]{Definition}

\newtheorem{remark}{Remark}

\newcommand{\threefield}[3]{$#1\mid#2\mid#3$}
\newcommand{\commentout}[1]{}

\DeclarePairedDelimiterX{\abs}[1]{\lvert}{\rvert}{#1}

\DeclareMathOperator{\rep}{rep}


\newcommand{\CES}{\threefield{1}{\rep}{\max_j \sum_i C_{i,j}}\xspace}
\newcommand{\CESbf}{{\boldmath \CES}\xspace}
\newcommand{\LES}{\threefield{1}{\rep}{\max_j \sum_i L_{i,j}}\xspace}
\newcommand{\LESbf}{{\boldmath \LES}\xspace}

\newcommand{\WES}{\threefield{1}{\rep}{\max_j \sum_i W_{i,j}}\xspace}
\newcommand{\WESbf}{{\boldmath \WES}\xspace}

\newcommand{\CESabbrv}{\CES}

\title{Fairness in Repetitive Scheduling}

\author[1]{Danny~Hermelin\thanks{Supported by the ISF, grant No.~1070/20.}} 
\author[2]{Hendrik~Molter\thanks{Supported by the ISF, grants No.~1070/20 and No.~1456/18, and European Research Council, grant number 949707. The main work was done while Hendrik Molter was affiliated with the Department of Industrial Engineering and Management of Ben-Gurion University of the Negev.}}
\author[3]{Rolf~Niedermeier}
\author[4]{Michael~Pinedo}
\author[1]{Dvir~Shabtay$^*$}

\affil[1]{\small Department of Industrial Engineering and Management, Ben-Gurion~University~of~the~Negev, 
Beer-Sheva, 
Israel\\ \texttt{hermelin@bgu.ac.il, dvirs@bgu.ac.il}}

\affil[2]{\small Department of Computer Science, Ben-Gurion~University~of~the~Negev, 
Beer-Sheva, 
Israel\\ \texttt{molterh@post.bgu.ac.il}}

\affil[3]{\small TU Berlin, Faculty IV, Algorithmics and Computational Complexity, Germany\\ \texttt{rolf.niedermeier@tu-berlin.de}}

\affil[4]{\small Leonard N.\ Stern School of Business, New York University, New York, USA\\ \texttt{mlp5@stern.nyu.edu}}

\date{}

\begin{document}

\maketitle

\begin{abstract}
Recent research found that fairness plays a key role in customer satisfaction. Therefore, many manufacturing and services industries have become aware of the need to treat customers fairly. Still, there is a huge lack of models that enable industries to make operational decisions fairly, such as a fair scheduling of the customers' jobs. Our main aim in this research is to provide a unified framework to enable schedulers making fair decisions in repetitive scheduling environments. For doing so, we consider a set of repetitive scheduling problems involving a set of $n$ clients. In each out of~$q$ consecutive operational periods (\emph{e.g.}\ days), each one of the customers submits a job for processing by an operational system. The scheduler's aim is to provide a schedule for each of the $q$ periods such that the quality of service (QoS) received by each of the clients will meet a certain predefined threshold. The QoS of a client may take several different forms, \emph{e.g.}, the number of days that the customer receives its job later than a given due-date, the number of times the customer receive his preferred time slot for service, or the sum of waiting times for service. We analyze the single machine variant of the problem for several different definitions of QoS, and classify the complexity of the corresponding problems using the theories of classical and parameterized complexity. We also study the price of fairness, i.e., the loss in the system's efficiency that results from the need to provide fair solutions. 

\medskip

\noindent\textbf{Keywords:} Algorithmic Fairness, Repetitive Scheduling, NP-hard Problems, Parameterized Complexity, Quality of Service
\end{abstract}

\clearpage

\section{Introduction}
The area of scheduling~\cite{pinedo2012scheduling} is one of the earliest studied fields in operational research and algorithm design. It studies methods by which work is allocated to resources that complete the work. That is, in a typical scheduling problem, we are given a set of jobs that are to be processed on a set of machines so as to optimize a certain scheduling criterion. Such problems occur frequently in numerous industrial and computational settings, and they lie at the core of several efficient implementations of complex systems (\emph{e.g.}\ manufacturing plants, multitask computers, etc.).

In this paper, our aim is to provide a unified framework for making fair decisions in a class of repetitive scheduling problems, where jobs of multiple clients have to be repeatedly scheduled over a finite set of periods. Such scenarios occur frequently both in industrial and commercial settings, as well as in public service settings. Moreover, in many such cases, it is more desirable to ensure customer satisfaction (through some notion of fairness) rather than maximize profits. Our goal is to provide a generic scheme that allows modeling such scheduling scenarios, where the common theme is to ensure that no client receives significantly worse service than any other client over the entire scheduling horizon. 

\subsection{Related work}

Traditionally, in most scheduling models, the objective is to optimize system performance from a global perspective (\emph{e.g.}\ minimizing the time to complete all jobs (makespan) and the total tardiness), ignoring fairness issues. One of the reasons that `fairness’ issues are disregarded in most studies is that there is not one acceptable definition of `fairness'. Whenever definitions of fairness are considered, there is an unending array of philosophic, social, and psychological issues that are immediately brought into consideration, which is precisely the difficulty in arriving at a unified, widely accepted definition~\cite{Wierman}. On top of that, it seems that definitions of fairness which are commonly used in one scheduling domain may not be applicable in other domains, at least not in a direct way.

\textbf{Fairness in classical non-repetitive scheduling.} In classical machine scheduling problems, there are several standard objective functions that address fairness in a very broad sense. For instance, minimizing the maximum lateness of a set of scheduled jobs can be interpreted as a fairness criterion, where we want to make sure that the maximal lateness of any of the jobs is minimized. There are also machine models where fairness comes into play. For example, when scheduling on a set of parallel machines, it is common to define the load on a machine as the total processing time of all jobs assigned to the machine. In such a setting, a fair schedule can be considered as a schedule that balances the load over the machines, so that no machine is assigned too large a load compared to others (see, \emph{e.g.},~\cite{Taub,Azar}). Another class of scheduling problems in which fairness is discussed is that of multi-agent scheduling (see, \emph{e.g.},~\cite{Agnetis2019},~\cite{Stan} and ~\cite{Gerstl}). Our model can be thought of as a generalization of multi-agent scheduling to the repetitive case, where the goal is to compute a set of fair schedules, rather than a single one.

\textbf{Fairness in medical care.} Fairness is a prime consideration in most patient (appointment) scheduling scenarios that are common in medical care. However, even within this specific domain of research, the definition of a \emph{fair schedule} has various different interpretations. For example, in Yan et al.~\cite{Yan} the term fair schedule is defined as one which minimizes the difference between the maximal and minimal waiting time of a patient. On the other hand, in Baum et al.~\cite{Baum} the goal is to find a schedule that maximizes the total revenue of all physicians, while the fairness of a schedule is defined by the minimum revenue of a physician. Note that not only do these papers differ in the way they define fairness, they also differ in their approach towards whom fairness considerations should be applied. 
Other examples of `fairness' issues in appointment scheduling problems include Samorani et al.~\cite{Samorani} where fairness is measured by the disparity between the waiting times experienced by the different patients, and Ala et al.~\cite{Ala} where the average waiting time is minimized.

\textbf{Fairness in communication networks.} Fairness issues are also discussed when scheduling bandwidth requests in communication networks (see, \emph{e.g.},~\cite{Bonald},~\cite{Pu}, and~\cite{Balter}). In this domain requests are time-shared. Accordingly, an acceptable definition of a fair schedule is a schedule in which the server assigns its resources fairly to those requests ready to receive service. However, here as well there are several subtleties in the exact notion of fairness that should be considered, and it is not always clear as towards whom fairness considerations should be applied.

\textbf{Fairness in repetitive scheduling systems.}  
The only paper that we found that considers fairness issues in repetitive scheduling problems is that of Ajtai et al.~\cite{Ajtai}. They consider a scheduling problem where each of the clients is also a resource provider (such as in a car sharing problem). Each client that submits a request on a specific day, may be asked to share his resource (car) with other clients submitting requests that same day. The objective is to provide a schedule for car sharing over a finite set of periods (days) such that all clients will have the same relative amount of resource sharing (as much as possible). 
A variant of their problem fits nicely within our framework, which naturally allows many more problems to be considered.      


\subsection{Our contribution}

Our main contribution is a new generic framework that allows the modeling of numerous repetitive scheduling problems where fairness is the main criterion under consideration. This whole area is mostly unexplored, apart from the paper by Ajtai et al.~\cite{Ajtai}, and we believe our framework to be an important first step into exploring the many questions arising in this setting. In~\cite{BKN21}, we presented a special case of our model where the fairness criterion is defined by the number of tardy jobs each client has. The current paper extends and generalizes this framework to include a wide variety of scheduling objectives and machine models. The main advantage of our model is threefold:
\begin{itemize}
\item \emph{Simplicity:} Our framework is very intuitive to understand for people familiar with scheduling problems. In fact, many classical single day scheduling problems, which make sense also in the repetitive setting, easily translate into our model. 
\item \emph{Generality:} Several different job parameters such as due dates and release times, as well various machine models, can all be easily incorporated into our framework. Moreover, jobs constraints such as precedence constraints naturally fit in as well. 
\item \emph{Robustness:} Our framework is easily extendable to handle even more general repetitive scheduling settings, including the case where clients do not submit jobs on certain days, and the case where different notions of fairness have to be applied to each client.
\end{itemize}

To exemplify the power of our model, we give a thorough analysis of three standard objective functions that give rise to three different fairness criteria. Namely, we consider completion time, waiting time, and lateness related fairness criteria (see Section~\ref{sec:model} for formal definitions). Each of these defines a different fair repetitive scheduling problem, and we analyze each of the problems separately. The goal is to explore the computational complexity of each problem, as well as to exhibit the richness of questions that arise from our framework. 

Finally, in the last part of the paper, we introduce the notion of \emph{price of fairness} to the area of repetitive scheduling. This measure, introduced by Bertsimas, Farias, and Trichakis in the context of resource allocation problems~\cite{BFT11}, quantifies the global loss of utility when applying a fair scheduling policy. Thus, analyzing the price of fairness of a certain repetitive scheduling problem allows us to determine the trade-offs between globally optimal and fair schedules.

\subsection{Techniques}

To explore the computational complexity landscape of the three examples of our repetitive scheduling problems, we mostly use techniques and notions from standard computational complexity, as can be found in classic textbooks such as~\cite{GareyJS76,Sipser}. We will distinguish between weakly NP-hard problems that are hard only when the instance numbers are huge (super-polynomial in the input length), and strongly NP-hard problems that are hard even if the instance numbers are small (polynomially bounded in the input length). 

We will also use techniques from parameterized complexity~\cite{DowneyFellowsNew,FlumGrohe}. Mostly, we will be concerned with fixed-parameter tractable (FPT) algorithms:
\begin{definition}
An algorithm that solves each instance of a certain problem in $f(k) \cdot n^{O(1)}$ time, where~$n$ is the size of the instance, $k$ the value of the parameter, and $f(k)$ a (monotonically increasing) computable function that depends solely on $k$, is called an FPT algorithm with respect to $k$. 
\end{definition}
Thus, an FPT algorithm implies that a problem is polynomial-time solvable when the given parameter is constant. But in fact, it also allows for polynomial-time algorithms when the parameter (slowly) grows with the input size $n$. The parameters we will consider for our repetitive scheduling problems are: The number of days $q$ in the repetitive scheduling instance, the number of clients $n$ in the instance, and the threshold parameter $K$ (see Section~\ref{sec:model} for more precise definitions).

To exclude the existence of FPT algorithms for some problem with respect to a certain parameter~$k$, it suffices to show that the problem is NP-hard when $k=O(1)$. However, there are several problems that are solvable in polynomial time when their respective parameter is upper bounded by a constant. For such problems, one can show hardness for the parameterized complexity class W[1] to exclude FPT algorithms, which may be thought of as the parameterized analog of showing NP-hardness. The common assumption used in parameterized complexity is then that FPT$\neq$W[1], and this assumption is used to show that certain problems are unlikely to admit FPT algorithms under certain parameterizations. This is done by providing appropriate reductions (called parameterized reductions in the literature) from known W[1]-hard problems. A parameterized reduction is allowed to have running time $f(k) \cdot n^{O(1)}$, where $k$ and $n$ are the input size and parameter value respectively, and its output parameter is required to have a value bounded by some function solely in $k$. We refer to~\cite{DowneyFellowsNew,FlumGrohe} for more information.

\subsection{Paper organization}

The rest of the paper is organized as follows. We formally define our framework for fair repetitive scheduling problems in Section~\ref{sec:model}, and we discuss the three concrete problems considered in the paper. Each one of the following three sections then focuses on one of the three problems we consider. Finally, we give a brief discussion about possible extensions and future work in Section~\ref{sec:conclusion}.

\section{A Model for Fairness in Repetitive Scheduling}
\label{sec:model}

Below we give a formal description of our model in full generality. In what follows we discuss the three repetitive scheduling problems and discuss how they fit within our framework. We close the section by providing an illustrative example of our model, and discuss some further natural applications.

\subsection{Formal model definition}

The class of repetitive scheduling problems we consider in this paper can be defined as follows. We are given a set of $n$ \emph{clients} ($j=1, \ldots, n$), and each client is associated with a single job on each one of $q$ different \emph{days} ($i=1, \ldots, q$). Let $(i,j)$ denote the job of client~$j$ to be scheduled on day $i$. By $\mathcal{J}_i=\{(i,j)\mid j\in \{1,\ldots,n\}\}$, $\mathcal{J}_j=\{(i,j)\mid i\in \{1,...,q\}\}$ and $\mathcal{J}=\{(i,j)\mid i\in \{1,\ldots,q\}, j\in \{1,\ldots,n\}\}$ we denote the set of jobs to be scheduled on day~$i$, the set of jobs that belong to client~$j$, and the set of all jobs, respectively. On each day $i\in \{1,\ldots,q\}$, the jobs in $\mathcal{J}_i$ are to be scheduled non-preemptively on a set of $m$ machines $\mathcal{M}=\{M_{\ell}\mid \ell\in \{1,\ldots,m\}\}$ arranged in a specific and predefined machine environment (such as a single machine (where $m=1$); identical, uniform or unrelated machines in parallel; flow-shop; job-shop or open-shop).

An instance for our problem includes a set of job processing times, $\{p_{i,j,\ell}\mid 1 \leq i \leq  q,\, 1 \leq j \leq n,\, 1 \leq \ell \leq m \}$, where $p_{i,j,\ell}$ represents the \emph{processing time} of job $(i,j)$ on machine $M_\ell$. In addition, when relevant, the instance may include other parameters for each job $(i,j) \in \mathcal{J}$ such as its \emph{due date} $d_{i,j}$, \emph{release date} $r_{i,j}$, and \emph{weight} $w_{i,j}$. For the sake of brevity, we omit the machine (day) index if the scheduling is done on a single machine, or if there is a single day in the scheduling horizon. 

A solution to our problem is a set of $q$ daily schedules $\mathcal{S} = \{\sigma_1, \ldots, \sigma_q\}$. The definition of a feasible daily schedule depends on the specific machine environment (and possibly also on other features and constraints of the problem). Once a solution is defined, we assume that the job completion times can be easily computed. Given a solution, let $C_{i,j}$ be the \emph{completion time} of job $(i,j)$, and let $Z_{i,j}$ 
be the \emph{performance measure} of job $(i,j)$ indicating the quality of service client $j$ receives on day $i$. In such a way, any feasible solution defines a \emph{performance matrix}:
\[
\begin{bmatrix}
Z_{1,1} & Z_{1,2} & \ldots & Z_{1,n} \\
Z_{2.1} & Z_{2,2} & \ldots & Z_{2,n} \\
\vdots & \vdots & \ddots & \vdots \\
Z_{q,1} & Z_{q,2} & \ldots & Z_{q,n}
\end{bmatrix}.
\]
For $1 \leq j \leq n$, we use $Z_j= \sum^q_{i=1} Z_{i,j}$ to denote the total performance measure (or quality of service) of client $j$, which is the sum of the entries of column $j$ in the performance matrix above. Examples of performance measures common in the scheduling literature include: the \emph{completion time} $C_{i,j}$ of job $(i,j)$; the \emph{waiting time} $W_{i,j}=C_{i,j}-p_{i,j}$ of job $(i,j)$; and the \emph{lateness} $L_{i,j}=C_{i,j}-d_{i,j}$ of job $(i,j)$.

We use the three-field notation, \threefield{\alpha}{\beta}{\gamma}, proposed by Graham et al.~\cite{Graham79} to refer to any scheduling problem. The first field ($\alpha$) describes the machine environment. For example, if 1 appears in this field, it means that we deal with a single-machine scheduling problem. The second field ($\beta$) defines the job characteristics and constraints. For example, if $p_{i,j}=p_i$ is specified in the second field, it implies that on each day all processing times are equal (job-independent processing times). We include ``$\rep$" in this field to indicate that we deal with a repetitive scheduling problem.
The last field~($\gamma$) includes information about the scheduling criterion to minimize.

\begin{definition}   
The \threefield{\alpha}{\beta,\rep}{\max_j \sum_i Z_{i,j}} problem asks to determine whether there exists a feasible schedule in which $\max_j \sum_i Z_{i,j} \leq K$, for some prespecified \emph{fairness threshold} (nonnegative integer) $K$. 
\end{definition}

The definition above encapsulates our framework for fair repetitive scheduling. We call any feasible schedule to a \threefield{\alpha}{\beta,\rep}{\max_j {Z_{j}}} problem, where $\max_j Z_j = \max_j \sum_i Z_{i,j} \leq K$, a \emph{$K$-fair} schedule. Such schedules are the main focus of this paper. 
 
\subsection{Three specific problems}

In this paper we consider three natural repetitive scheduling settings as examples for our model. In all problems we focus on the single machine unconstrained case, \emph{i.e.}\ the \threefield{1}{\rep}{\max_j {Z_{i,j}}} problem. We consider three natural performance measures $Z_{i,j}$ that are amongst the most classical scheduling criteria studied in the literature. For all these performance measures, a schedule $\sigma_i$ for day $i \in \{1,\ldots,q\}$ is simply a permutation of the $n$ jobs on that day. Thus, a solution for the three problems considered in this paper is simply a set of $q$ permutations.

\CESbf: The first criterion we consider is the total completion time $C_{i,j}$, and the corresponding \CES problem. In this problem, we are given $q$ jobs (one for each day) with processing times $p_{1,j}, p_{2,j},\ldots, p_{q,j}$ for each client $j \in \{1,\ldots,n\}$, and a fairness threshold~$K$. A solution to this problem is a set of $q$ schedules $\mathcal{S}=\{\sigma_1,\ldots,\sigma_q\}$, one for each day.  For such a solution, the completion time $C_{i,j}$ of client~$j$ on day~$i$ is the sum of all processing times $p_{i,\ell}$ such that $\sigma_i(\ell) \leq \sigma_i(j)$, and its total completion time is simply $\sum_i C_{i,j}$, the sum of all completion times on all days. Our goal is, given a threshold parameter $K$, to determine whether there exists a solution where the total completion time of each client is at most $K$. 

\WESbf: The input to the \WES problem is identical to \CES, but we are concerned with waiting times of the clients, rather than their completion times. The waiting time of client $j$ on day $i$ is given by $C_{i,j}-p_{i,j}$, and the total waiting time of client $j$ is $\sum_i W_{i,j}$. Given a threshold parameter $K$, our goal is to determine whether there exists a solution where the total waiting time of each client is at most $K$. 

\LESbf: In the third problem we assume that, in addition to the processing time, each client $j$ has a due date $d_{i,j}$ on day $i$. Given a solution $\mathcal{S}=\{\sigma_1,\ldots,\sigma_q\}$ to a \LES instance, the lateness of client $j$ on day $i$ is $C_{i,j}-d_{i,j}$. Our goal is to determine whether there exists a solution where the total lateness $\sum_i L_{i,j}$ is at most the given fairness threshold $K$.

\begin{remark}
Note that although the problems of minimizing total completion time, waiting time and lateness are equivalent in the classical setting of a single day, as we will see, they are not so in the context of repetitive scheduling. 
\end{remark}

\subsection{An illustrative example}
\label{subsec:example}

Consider three patients, Alice, Bob, and Charlie, of the same doctor. Each of the $n=3$ patients requires a daily treatment in $q=2$ consecutive days. The treatment of Alice requires a single hour each day, the treatment of Bob requires two hours each day, while the treatment of Charlie requires three hours each day. Thus, we have $p_{1,1}=p_{2,1}=1$, $p_{1,2}=p_{2,2}=2$, and $p_{1,3}=p_{2,3}=3$ in this example. Suppose that on each day, all three patients arrive at the doctors office at the same time (say 8:00 a.m.), not knowing in which order the doctor will perform the treatments. The doctor, being considerate of his patients, wishes to focus on the completion time $Z_{i,j}=C_{i,j}$ criteria, \emph{i.e.}\ the time each patient gets to go home.

If the doctor only wishes to compute the \emph{global optima} of this repeated scheduling problem (\emph{i.e.}\ the minimum total completion time throughout all days, $\min \sum_i \sum_j C_{i,j}$), then this corresponds to solving each of the two single day \threefield{1}{}{\sum_j C_j} instances separately. In our example this would correspond to scheduling Alice, Bob, and then Charlie each day, giving Alice a total completion time of $C_1 = C_{1,1} + C_{2,1}=2$, Bob a total completion time of $C_2 = C_{1,2} + C_{2,2}=6$, and Charlie a total completion time of $C_3 = C_{1,3} + C_{2,3}=12$. So 20 hours overall in the doctors office for all patients.    

However, clearly this schedule is not fair to Charlie, as he has to spend a total of 12 hours in the doctors office; twice the amount that Bob does, and six times the amount that Alice does. To be more fair, the doctor could therefore schedule the patients in the reverse order on the second day. This would make Alice spend 7 hours in the office, Bob 8 hours, and Charlie 9 hours, yielding a total of 24 hours. However, no patient needs to spend more than 9 hours at the doctors office in this solution, which is more fair overall, certainly from Charlie's perspective.

Thus, computing the global optima is always equivalent to optimally solving the $q$ single day instances optimally. Furthermore, note that computing the global optima in our framework is not equivalent to computing a $K$-fair solution even for the single day case ($q=1$). For instance in our example of \CES, when the scheduling is done on a single machine and $Z_{j}=C_{j}$, the corresponding global minima problem is solvable in $O(n\log{n})$ time by applying the ``shortest processing time first (SPT)'' rule~\cite{SMITHNS1956}, while any schedule provides a minimal $K$ for which the solution is $K$-fair.

\subsection{Further example applications}

The patient treatment example above occurs in various real life settings, including dialysis treatment, therapeutic procedures, or psychotherapy sessions. In many such settings, the time the treatment ends greatly affects the quality of life of a patient. Thus a fair schedule would try to ensure that the total completion time of treatments of each patient does not exceed a certain threshold. This naturally leads to the \CES problem.

In our example, and in several other medical care applications, it make sense not to penalize the client for having jobs with long processing times. In other settings, that might be unfair. Consider, for example, a set of freelance workers (clients) that apply annually to obtain tax refunds from the government. After the tax files (jobs) have been submitted, they are to be processed by a team of tax officials. Obviously the tax file of each worker is different, and thus the time required to process the requests is client-dependent. Each worker wants to get his refund as soon as possible to improve his cash flow. In such a case, a fair solution would be to minimize the total waiting time of the workers over a number of years, since workers with simple files should not be penalized for other workers with complicated files. Thus, we get a natural application of the \WES problem.


Finally, consider custom checks on imported containers in a port. There are $n$ companies, and each has its own set of containers that have to be checked by the customs team every day. The team inspects (either by sampling or by checking all containers) each one of the companies, one after the other. Each company has a due date dictated by its customers requirements. The customers are willing to pay a bonus for early delivery, but demand that the company pays a penalty for late delivery, each depending on how early or late the delivery is. In such a case, a fair scheduler may want to minimize the total lateness over all days of each client, \emph{i.e.}\ the \LES problem.

\subsection{Possible extensions}

Our framework is robust in a sense that it can easily be extended to even more general repetitive settings. Below we discuss two possible extensions.

\begin{itemize}

\item In some cases, we might want to have different fairness thresholds for different clients (\emph{e.g.}\ elderly versus young patients). This can easily be modeled in our framework by assigning client-specific fairness thresholds $K_j$. In fact, the \LES problem reduces to such a generalization of \CES. 
\item In some applications, a client may submit a job only in a subset of the $q$ days. In such cases, an immediate extension of our model would be to redefine $Z_j$ to be the average performance measure of client $j$ over all days in which he submitted a job, \emph{i.e.}\ $Z_j = 1/q_j\sum_{i=1}^q Z_{i,j}$, where~$q_j$ is the number of days in which client $j$ submitted a job, and $Z_{i,j}=0$ if client $j$ did not submit a job on day $i$. 
\end{itemize}


\section{The \CESbf Problem}
\label{sec:CES}%

As a first example of our framework, we consider the \CES problem. Recall that in this problem, we want to determine whether there exists a set of $q$ schedules such that the total completion time $\sum_i C_{i,j}$ of each client $j$ over all $q$ days is at most $K$ (the fairness threshold). We present a thorough analysis of the computational complexity landscape of the problem with respect to parameters $n$ (number of clients) and $q$ (number of days), considering both the cases where the processing times are allowed to be huge or are encoded in unary (and so are polynomially bounded by the input size).

\subsection{A polynomial-time solvable special case}

We start with the special case of \CES with only two days ($q=2$). We show that this case is polynomial-time solvable, based on an observation regarding $K$-fair schedules for this special case. Namely, we show that if there exists a $K$-fair solution, then there also exists a $K$-fair solution which schedules the jobs on day 2 in the reverse order of day 1. Later on, we show how to use this property to construct a polynomial-time algorithm.

\begin{lemma}
\label{lemma:reverse}
For any $K$, there exists a $K$-fair schedule for a \CES instance with $q=2$ if and only if there exists a $K$-fair schedule where the schedule for day 1 is in the reverse order of the schedule of day 2.
\end{lemma}
\begin{proof}
Assume that there is a $K$-fair solution to a \CESabbrv instance with $q=2$, and there exist two clients $j$ and $j'$ such that the jobs of client $j$ are scheduled after the jobs of client~$j'$ on both days. Then moving the job of client $j'$ directly after the job of client~$j$ on one of the days decreases the total completion time of~$j$, while the total completion time of client $j'$ after the modification is still smaller than the total completion time of client~$j$ before the modification. Hence, the new schedule is $K$-fair as well. We can repeat this procedure until there are no more pairs of clients such that one of them has its jobs scheduled after the other one on both days.
\end{proof}

Using \cref{lemma:reverse}, we design an algorithm that is similar to Johnson's algorithm~\cite{pinedo2012scheduling} for minimizing the makespan in a two-machine flow-shop scheduling problem (\emph{i.e.}\ the \threefield{F2}{}{C_{\max}} problem). Similar in the sense that it uses the same order of Johnson's algorithm on the first day, but a different one for the second day because of Lemma~\ref{lemma:reverse}.

\begin{theorem}
\label{thm:2days}%
The \CESabbrv problem with $q=2$ is solvable in $O(n\log n)$ time.
\end{theorem}
\begin{proof}
We call clients $j$ with $p_{1,j}\le p_{2,j}$ \emph{Type~1} clients (no larger processing time on day~1), and clients $j$ with $p_{1,j}> p_{2,j}$ \emph{Type~2} clients (smaller processing time on day~2). 
We create a schedule for day 1 that starts with Type~1 jobs in non-decreasing order of~$p_{1,j}$ and then Type~2 jobs in non-increasing order of~$p_{2,j}$. Such schedules are referred to as \emph{SPT(1)-LPT(2)} schedules~\cite{pinedo2012scheduling} which are known to be optimal for \threefield{F2}{}{C_{\max}}~\cite{pinedo2012scheduling} (flow shop scheduling with two machines and makespan minimization). We use the reverse order on day~2. This is a major difference from the algorithm  for \threefield{F2}{}{C_{\max}}, where the same order is used for the second machine. It is straightforward to check that the algorithm obeys the claimed running time bound, since the sorting step dominates the overall running time. We now show the correctness of the algorithm. 

Consider an optimal set of schedules $\mathcal{S}^\star=\{\sigma^\star_1,\sigma^\star_2\}$. By \cref{lemma:reverse} we know that there is an optimal schedule where day 1 is the reverse of day 2, so assume w.l.o.g.\ that $\sigma^\star_1$ is the reverse of $\sigma^\star_2$.
 
Assume that there is at least one client of Type~1 and at least one client of Type~2. We first show that there is an optimal set of schedules where the jobs of all Type~1 clients are scheduled earlier than the jobs of Type~2 clients on day 1. Take an optimal set of schedules $\mathcal{S}^\star=\{\sigma^\star_1,\sigma^\star_2\}$ where $\sigma^\star_1$ is the reverse of $\sigma^\star_2$ and let client $j_1$ be the Type~1 client with the earliest scheduled job on day 1 such that there is a Type~2 client that has a job scheduled earlier on day 1. Note that this implies that a job of a Type~2 client $j_2$ is scheduled directly before the job of client $j_1$. It is easy to observe that swapping the jobs of these two clients on both days does not affect the total completion time of all other clients. Furthermore, the total completion time of each of these two clients is also not increased by this swap: Let~$\mathcal{S}^{\star\star}$ denote the set of schedules obtained from $\mathcal{S}^\star$ after swapping the jobs of clients~$j_1$ and~$j_2$ on both days. Then $C_{j_1}(\mathcal{S}^{\star\star})=C_{j_1}(\mathcal{S}^{\star})-p_{1,j_2}+p_{2,j_2}\le C_{j_1}(\mathcal{S}^{\star})$, since $p_{1,j_2}>p_{2,j_2}$. Analogously $C_{j_2}(\mathcal{S}^{\star\star})=C_{j_2}(\mathcal{S}^{\star})+p_{1,j_1}-p_{2,j_1}\le C_{j_2}(\mathcal{S}^{\star})$, since $p_{1,j_1}\le p_{2,j_1}$. Hence, we can iteratively swap Type~1 and Type~2 clients until all Type~1 clients are earlier on day 1 than the Type~2 clients.
 
Now assume there are two Type~1 clients $j_1$ and $j_2$ such that $p_{1,j_1}\le p_{1,j_2}$ but the job of~$j_1$ is scheduled after the job of~$j_2$ on day 1. W.l.o.g.\ we can assume that $j_1$'s job is scheduled \emph{directly} after the job of $j_2$ on day 1. Again we have that swapping the jobs of these two clients on both days does not affect the total completion time of all other clients. Let~$\mathcal{S}^{\star\star}$ denote the set of schedules obtained from $\mathcal{S}^\star$ after swapping the jobs of clients~$j_1$ and~$j_2$ on both days. Note that the completion time of~$j_1$ on day 2 after the swap is the same as the completion time of $j_2$ on day 2 before the swap. The completion time of $j_1$ on day 1 after the swap is the completion time of $j_2$ on day 1 before the swap plus $p_{1,j_1}-p_{1,j_2}\le 0$. Hence, we have $C_{j_1}(\mathcal{S}^{\star\star})=C_{j_2}(\mathcal{S}^{\star})-p_{1,j_2}+p_{1,j_1}\le C_{j_2}(\mathcal{S}^{\star})$. 
Furthermore, we can write $C_{j_2}(\mathcal{S}^{\star\star})=C_{j_2}(\mathcal{S}^{\star})+p_{1,j_1}-p_{2,j_1}\le C_{j_2}(\mathcal{S}^{\star})$. Hence, we have that the total completion time of both clients after the swap equals at most the total completion time of $j_2$ before the swap. This implies that $\mathcal{S}^{\star\star}$ is at least as good as $\mathcal{S}^{\star}$. 
  
The case where we have two Type~2 clients $j_1$ and $j_2$ such that $p_{2,j_1}\ge p_{2,j_2}$ but the job of~$j_1$ is scheduled after the job of~$j_2$ on day 1 is completely analogous. The two days change their role.
\end{proof}

\subsection{Two weakly NP-hardness results}

Recall that by Theorem~\ref{thm:2days}, the problem \CES is polynomial-time solvable in case there are at most two days in the scheduling horizon. We next complement this result by showing that the problem is (weakly) NP-hard when there are at least four days in the horizon. This leaves only the case of $q=3$ open. 

Our hardness result is based on a reduction from the well-known weakly NP-hard \textsc{Partition} problem~\cite{GJ79}.
\begin{definition}
\label{def:partition}
\textsc{Partition}: Given a set $A=\{a_1, \ldots, a_{|A|}\}$ of integers with $\sum_{a\in A}a=2B$, is there a partition of $A$ into $A_1$ and $A_2$ $($i.e., $A_1\cup A_2=A$ and $A_1\cap A_2=\emptyset$$)$ such that $\sum_{a\in A_1}a=\sum_{a\in A_2}a=B$?
\end{definition}

\begin{theorem}
\label{thm:fewdaysNPh}
The \CES problem is weakly NP-hard for all $q\ge 4$. 
\end{theorem}
\begin{proof}
Given an instance $I$ of \textsc{Partition}, we construct an instance for the \CES problem with $n=|A|+2$ and $q=4$ in the following way. 
\begin{itemize}
    \item We create one client $x$ with $p_{1,x}=2B$, $p_{2,x}=5B$, and $p_{3,x}=p_{4,x}=0$.
    \item We create one client $y$ with $p_{1,y}=p_{2,y}=0$, and $p_{3,y}=2B$, $p_{4,y}=5B$.
    \item For each $a_j\in A$ we create one client $j$ with $p_{2,j}=p_{4,j}=a_j$ and $p_{1,j}=p_{3,j}=0$.
    \item We set $K=8B$.
\end{itemize}
To obtain the result for more than four days, we can add additional days where all clients have jobs with processing time zero.
The described \CES instance can clearly be computed in polynomial time.

\emph{Correctness.} We now show that the \textsc{Partition} instance is a yes-instance if and only if the constructed \CES instance is a yes-instance.

($\Rightarrow$): Assume that the given \textsc{Partition} instance is a yes-instance. Then there exists a partition of $A$ into $A_1$ and $A_2$ such that $\sum_{a_j\in A_1}a_j=\sum_{a_j\in A_2}a_j=B$. Then we schedule the jobs in the following way. 
\begin{itemize}
    \item On day 1, we schedule the job of client $x$.
    \item On day 2, we first schedule the jobs of clients $j$ with $a_j\in A_1$ in an arbitrary order. Then we schedule the job of client $x$. Afterwards, we schedule the jobs of clients $j$ with $a_j\in A_2$ in an arbitrary order.
    \item On day 3, we schedule the job of client $y$.
    \item On day 4, we first schedule the jobs of clients $j$ with $a_j\in A_2$ in an arbitrary order. Then we schedule the job of client $y$. Afterwards, we schedule the jobs of clients $j$ with $a_j\in A_1$ in an arbitrary order.
\end{itemize}

Note that above we specified how to schedule all jobs with non-zero processing times. All remaining jobs are scheduled first on their respective days in arbitrary order.

The total completion times of clients $x$ and $y$ are each exactly $K$, since we know that $\sum_{a_j\in A_1}a_j=\sum_{a_j\in A_2}a_j = B$. Now, assume that there is a client~$j'$ with $a_{j'}\in A_1$ that is scheduled last among all clients~$j$ with $a_j\in A_1$ on both days 2 and 4. The total completion time of client~$j'$ is then also exactly $K$. It follows that the total completion times of all other clients~$j$ with $a_j\in A_1$ are all at most $K$. By symmetry, we also have that the total completion time of all clients $j$ with $a_j\in A_2$ is at most $K$.

($\Leftarrow$): Assume that the constructed \CESabbrv instance is a yes-instance and we have a $K$-fair set of schedules $\mathcal{S}$. Consider day 2. For all clients $j$ whose jobs are scheduled earlier than the job of client $x$ on day 2, we put $a_j$ into $A_1$. For all remaining clients~$j$, we put $a_j$ into $A_2$. Assume for contradiction that $\sum_{a_j\in A_1}a_j\neq\sum_{a_j\in A_2}a_j$. Consider first the case $\sum_{a_j\in A_1}a_j>\sum_{a_j\in A_2}a_j$. Then $C_x(\mathcal{S})=7B+\sum_{a\in A_1}a>8B=K$, a contradiction. Now consider the case $\sum_{a_j\in A_1}a_j<\sum_{a_j\in A_2}a_j$. Assume for contradiction that all clients~$j$ with $a_j\in A_2$ have their job scheduled earlier than client~$y$ on day 4. Then by a similar argument as above we get that $C_y(\mathcal{S})>8B=K$, a contradiction. Hence we have that there is a client $j$ that has its job scheduled after client $x$ on day 2 and after client $y$ on day 4. However, then we have that $C_j(\mathcal{S})>10B>K$, a contradiction.
%
\end{proof}

Theorems~\ref{thm:2days} and~\ref{thm:fewdaysNPh} together draw a clear picture of the NP-hardness landscape of \CES with respect to parameter~$q$, the number of days in the scheduling horizon. The only case that remains open is $q=3$. We next consider parameter~$n$, the number of clients. We show that the problem is (weakly) NP-hard already for $n=2$ (but $q$ being arbitrary).

\begin{theorem}
\label{thm:fewclientsNPh}
The \CES problem is weakly NP-hard for all $n\ge 2$. 
\end{theorem}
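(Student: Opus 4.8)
The plan is to give a polynomial-time many-one reduction from \textsc{Partition}~\cite{GJ79}, which is the natural source of weak NP-hardness and fits perfectly the binary nature of a two-client day. The starting observation is that with $n=2$ clients there are only two schedules per day: on day $i$ the client processed first has completion time equal to its own processing time, while the client processed second has completion time equal to $p_{i,1}+p_{i,2}$. Thus on each day exactly one client is \emph{delayed} by the other, and summing over days gives $C_1 = \sum_i p_{i,1} + \sum_{i \in D_1} p_{i,2}$ and $C_2 = \sum_i p_{i,2} + \sum_{i \in D_2} p_{i,1}$, where $D_1$ (resp.\ $D_2$) is the set of days on which client~$1$ (resp.\ client~$2$) is delayed, and $\{D_1, D_2\}$ partitions the set of days.

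Given a \textsc{Partition} instance $S = \{s_1,\dots,s_{|S|}\}$ with $\sum_{s \in S} s = 2B$, I would create one day $i$ per item $s_i$, set $p_{i,1} = p_{i,2} = s_i$ for the two clients, and put $k = 3B$. With these symmetric processing times the choice of day-schedule corresponds exactly to deciding, for item $s_i$, whether its ``delay weight'' $s_i$ is charged to client~$1$ or to client~$2$; the formulas above specialize to $C_1 = 2B + \sum_{i \in D_1} s_i$ and $C_2 = 2B + \sum_{i \in D_2} s_i$. To cover every $n \ge 2$, I would simply add $n-2$ further clients whose processing time is $0$ on all days; scheduling them first on each day keeps their total completion time at $0 \le k$ and leaves the completion times of the two real clients unchanged.

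For correctness I would use that $C_{i,1} + C_{i,2} = 3 s_i$ holds on every day regardless of the chosen order, so that $C_1 + C_2 = 6B$ is fixed. Hence the two requirements $C_1 \le k$ and $C_2 \le k$ with $k = 3B$ can be met only if $C_1 = C_2 = 3B$, i.e.\ $\sum_{i \in D_1} s_i = \sum_{i \in D_2} s_i = B$. This is precisely a balanced partition of $S$: a $k$-equitable set of schedules yields the partition $(\{s_i : i \in D_1\}, \{s_i : i \in D_2\})$, and conversely a balanced partition prescribes, on each day, which client to schedule first. Since the sizes $s_i$ are encoded in binary and the whole construction is clearly polynomial, this establishes weak NP-hardness for all $n \ge 2$.

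Because the two-client structure already pins the schedule down to a single binary choice per day, I do not expect a genuine obstacle here; the only point requiring care is the ``forcing'' step, where the constant value of $C_1 + C_2$ is used to upgrade the two inequalities $C_1, C_2 \le 3B$ into the exact equalities needed for a balanced partition, together with the routine check that the dummy zero-clients can always be scheduled first without affecting either direction of the argument.
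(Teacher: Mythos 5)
Your reduction is exactly the one in the paper: one day per item, $p_{i,1}=p_{i,2}=s_i$, $k=3B$, dummy zero-processing-time clients for $n>2$. The only (cosmetic) difference is in the backward direction, where you invoke the conserved sum $C_1+C_2=6B$ to force $C_1=C_2=3B$, while the paper argues by direct contradiction that an unbalanced partition pushes one client's total above $3B$ --- the same counting, cleanly packaged; your proposal is correct.
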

\begin{proof}
We present a polynomial-time reduction from \textsc{Partition} (see Definition~\ref{def:partition}). Given an instance $I$ of \textsc{Partition}, we construct an instance of \CESabbrv with $n=2$ clients and $q=|A|$ days in the following way. We set
\[
p_{i,j} = a_i \text{ for all } j\in\{1,2\} \text{ and } i\in\{1,\ldots,|A|\},
\]
and we set $K=3B$. To obtain the result for more than two clients, we can add additional clients that have jobs with processing time zero.
This can clearly be done in polynomial time.

\emph{Correctness.} We now show that the \textsc{Partition} instance is a yes-instance if and only if the constructed \CESabbrv instance is a yes-instance.

($\Rightarrow$): Assume that the given \textsc{Partition} instance is a yes-instance. Then there exists a partition of $A$ into $A_1$ and $A_2$ such that $\sum_{a_j\in A_1}a_j=\sum_{a_j\in A_2}a_j=B$. We schedule the jobs in the following way: For all days $i\in\{1,\ldots, |A|\}$ if $a_i\in A_1$, then we schedule the job of client 1 first and the job of client 2 second on day~$i$. Otherwise, we schedule the job of client 2 first and the job of client 1 second on day~$i$. Observe that for each $j \in \{1,2\}$ and $i\in\{1,\ldots, |A|\}$ we have that $C_{i,j}=a_i$ if and only if $a_i\in A_j$, and $C_{i,j}=2a_i$ otherwise. Now it is easy to check that the total completion time for client~1 is $\sum_{a_j\in A_1}a_j + 2\sum_{a_j\in A_2}a_j= 3B$. Analogously, we have that the total completion time of client 2 is $\sum_{a_j\in A_2}a_j+2\sum_{a_j\in A_1}a_j=3B=K$.

($\Leftarrow$): Assume that the constructed \CESabbrv instance is a yes-instance. Then there exists a set $\mathcal{S}$ of $q$ schedules $\{\sigma_1,\ldots,\sigma_q\}$ so that for each $j \in \{1,2\}$ we have $\sum_{i=1}^q C_{i,j}\le \frac{3}{2}\sum_{a_j\in A}a_j$. We claim that $A_1=\{a_i \mid \sigma_i(1)=1\}$ and $A_2=\{a_i \mid \sigma_i(2)=1\}$ constitutes a solution for the given \textsc{Partition} instance. Suppose for contradiction that it is not. First, observe that by construction $A_1\cup A_2=A$ and $A_1\cap A_2=\emptyset$, that is, we have a partition of $A$. Now if the partition of $A$ into $A_1$ and $A_2$ is not a solution for the \textsc{Partition} problem, then w.l.o.g.\ we have that $\sum_{a_j\in A_1}a_j >\frac{1}{2}\sum_{a_j\in A}a_j > \sum_{a\in A_2}a$. Now consider the total completion time of client 2. It is $\sum_{a_j\in A_2}a_j + 2\sum_{a_j\in A_1}a_j = \sum_{a_j\in A} a_j+ \sum_{a_j\in A_1} a_j > 3B=K$, a contradiction.
\end{proof}

\subsection{Strong NP-hardness}

Theorems~\ref{thm:2days}, \ref{thm:fewdaysNPh}, and~\ref{thm:fewclientsNPh} give a complete picture of the complexity of \CES in terms of $n$ and $q$, apart from the case of $q=3$. However, in all results above we assumed that the processing times are allowed to be huge (exponential in the input size). What happens when the processing times of the jobs are all small (polynomial bounded in the input size)? Below we show that unfortunately \CES remains NP-hard not only when all processing times are constant, but even when the fairness threshold is constant. 

Our proof is done by a polynomial-time reduction from \textsc{Exact (3,4)-SAT}, which is known to be NP-hard~\cite{Tov84}.

\begin{definition}
\textsc{Exact (3,4)-SAT}: Given a Boolean formula $\phi$ (in a conjunctive normal form), where every clause contains exactly three distinct variables and every variable appears in exactly four clauses, we need to identify whether there is a satisfying assignment for $\phi$.
\end{definition}

\begin{theorem}
\label{thm:strNPh}
The \CES problem is strongly NP-hard even if $K\in O(1)$.
\end{theorem}
\begin{proof}
Given an instance for the \textsc{Exact (3,4)-SAT}, we construct an instance of \CES in the following way. 
\begin{itemize}
\item For each variable $x$ that appears in $\phi$, we create three clients $x, x^T, x^F$ and two days~$x_1,x_2$. We set processing times $p_{x_1,x}=9$, $p_{x_2,x}=27$. We set $p_{x_2,x^T}=p_{x_2,x^F}=1$.
\item For each clause $c$ that appears in $\phi$, we create one client $c$ and two days $c_1,c_2$. We set $p_{c_1,c}=29$, $p_{c_2,c}=6$. Let $x,y,z$ be the three variables appearing in $c$. If $x$ appears non-negated, then we set $p_{c_2,x^T}=1$, otherwise we set $p_{c_2,x^F}=1$. Analogously for $y$ and~$z$.
\item For each combination of a client $j$ and a day $i$ where we have not specified $p_{i,j}$ above, we set $p_{i,j}=0$.
\item We set $K=37$.
\end{itemize}
This finishes the construction of the \CES instance and can clearly be done in polynomial time. 

\emph{Correctness.} We now show that $\phi$ is satisfiable if and only if the constructed \CES instance is a yes-instance.

($\Rightarrow$): Assume that $\phi$ is satisfiable and we have a satisfying assignment for the variables in~$\phi$. We construct a $K$-fair schedule as follows.
\begin{itemize}
\item For each variable $x$ we schedule the job of client $x$ first on day $x_1$. If $x$ is set to \texttt{true} in the satisfying assignment, then on day $x_2$ we schedule the jobs of client $x_T$ first, the job of client~$x$ second, and the job of client $x^F$ third. If $x$ is set to \texttt{false} in the satisfying assignment, then on day $x_2$ we schedule the jobs of client $x_F$ first, the job of client~$x$ second, and the job of client $x^T$ third.
\item For each clause $c$ we schedule the job of client $c$ first on day $c_1$. Let $x,y,z$ be the variables in $c$ and let w.l.o.g.\ $x$ be a variable that is set to a truth-value that satisfies $c$. We schedule on day~$c_2$ the jobs of client $y^T$ (resp.\ $y^F$) first, the job of client $z^T$ (reps.\ $z^F$) second, the job of client~$c$ third, and the job of client $x^T$ (resp.\ $x^F$) fourth, depending on whether the variables appear negated or non-negated in $c$ (say, $x$ appears non-negated, then we schedule~$x^T$, otherwise we schedule $x^F$).
\item Note that above we specified how to schedule all jobs with non-zero processing times. All remaining jobs are scheduled first (that is, before any job with non-zero processing time) on their respective days in arbitrary order.
\end{itemize}
Call the above set of schedules $\mathcal{S}$. Assume for contradiction that $\mathcal{S}$ is not a $K$-fair set of schedules. Then there is a client $j$ with $C_j(\mathcal{S})>K$. It is easy to observe that for all variables $x$ we have that $C_x(\mathcal{S})=K$ and for all clauses $c$ we have that $C_c(\mathcal{S})=K$. It follows that $j=x^T$ or $j=x^F$ for some variable $x$.

Assume w.l.o.g\ that $j=x^T$ (the case that $j=x^F$ is analogous). We know that there are four clauses $a,b,c,d$ in $\phi$ that contain variable $x$. Note that for $i\notin\{x_2,a_2,b_2,c_2,d_2\}$ we have that $p_{i,x^T}= 0$. We claim that $C_{x^T}(\mathcal{S})>K$ implies that the job of client $x^T$ is scheduled last on day~$x_2$ and on at least one of the days $a_2,b_2,c_2,d_2$. Assume that the job of client~$x^T$ is scheduled first on day $x_2$. Note that the total processing time of all jobs on each of the days $a_2,b_2,c_2,d_2$ is $9$. This means that the total completion time of client $x^T$ is at most~$K=37$. Now assume that the job of client $x^T$ is scheduled last on day $x_2$ and not scheduled last on any of the days $a_2,b_2,c_2,d_2$. Then the total completion time of $x_2$ is again at most~$K=37$. Hence, we have that the job of client $x^T$ is scheduled last on day $x_2$ and w.l.o.g.\ also scheduled last on day~$a_2$. This, however, is a contradiction since the job of client $x^T$ being scheduled last on day $x_2$ means (by construction of~$\mathcal{S}$) that in the satisfying assignment $x$ is set to \texttt{false} but the job of client $x^T$ being scheduled last on day $a_2$ means (by construction of $\mathcal{S}$) that $x$ appears non-negated in clause $a_2$ and $a_2$ is satisfied by~$x$.

($\Leftarrow$): Assume that the constructed \CESabbrv instance is a yes-instance and we have a $K$-fair set of schedules $\mathcal{S}$. We construct a satisfying assignment for $\phi$ as follows.

For every variable $x$, if the job of client $x^T$ is scheduled before the job of client $x^F$ on day~$x_2$, then we set $x$ to \texttt{true}. Otherwise, we set $x$ to \texttt{false}.

Assume for the sake of contradiction that the above described assignment is not satisfying for~$\phi$. Then there is a clause $c$ that is not satisfied. Let $x,y,z$ be the variables appearing in $c$ and let them w.l.o.g.\ all appear non-negated. If $c$ is not satisfied, then we set all three variables to \texttt{false}. Now consider day $c_2$. We assume w.l.o.g.\ that the jobs of all clients that have zero processing time on day~$c_2$ are scheduled first in an arbitrary order. Note that the job of client $c$ cannot be scheduled last on day $c_2$, since then the total completion time of client $c$ would be at least~$38$, a contradiction to the assumption that $\mathcal{S}$ is a $K$-fair schedule. It follows that w.l.o.g.\ the job of client~$x^T$ is scheduled last on day $c_2$. Now consider day $x_2$. Again, we assume w.l.o.g.\ that the jobs of all clients that have zero processing time on day $x_2$ are scheduled first in an arbitrary order. Note that the job of client~$x$ cannot be scheduled last on day $x_2$, since then the total completion time of client~$x$ would be at least $38$, a contradiction to the assumption that $\mathcal{S}$ is a $K$-fair set of schedules. It follows that the job of client $x^T$ is scheduled last on day $x_2$. However, now we have that the total completion time of client $x^T$ is at least $38$, a contradiction.
\end{proof}

\subsection{Small processing times}
Theorem~\ref{thm:strNPh} above shows that \CES remains NP-hard even if the fairness threshold and all processing times are constant. The question now remains: What happens if all processing times are small and either $n$ or $q$ is small as well? In what follows we address this question. 

We first consider the case where the number $n$ of clients is small. We show via dynamic programming that \CES is solvable in polynomial time when $n=O(1)$ and all processing times are encoded in unary (polynomially bounded in the input size). Intuitively, in our dynamic program we try out all possible schedules for a certain day~$i$, and use the dynamic programming table to look up which schedules to use for days 1 to~$i-1$.

\begin{theorem}
\label{thm:unaryfewdays}
The \CES problem is polynomial-time solvable when $n=O(1)$ and all processing times are encoded in unary.
\end{theorem}
\begin{proof}
We use a dynamic program with table $T : \{0,\ldots,q\}\times \{-1,0,\ldots,K\}^n \rightarrow \{\texttt{true},\texttt{false}\}$. Intuitively, $T(i,K_1,\ldots,K_n)$ tells us whether there exists a set of schedules for days 1 to $i$ such that the total completion time of client $i$ is at most $K_j$ for all $j\in\{1,\ldots,n\}$. Formally, we define the table as follows.
\begin{align*}
T(0,K_1,\ldots,K_n) &=    \begin{cases}
      \texttt{true}, & \text{if for all } i : K_i\neq -1,\\
      \texttt{false}, & \text{otherwise}.
    \end{cases}\\
T(i,K_1,\ldots,K_n) &= \bigvee_{\sigma_i} \ T(i-1,\max\{-1,K_1-C_{i,1}\},\ldots,\max\{-1,K_n-C_{i,n}\}).
\end{align*}

Next, we show that $T(m,K,\ldots, K)=\texttt{true}$ if and only if we face a yes-instance. More specifically, we show that $T(i,K_1,\ldots,K_n)=\texttt{true}$ if and only if there is a set of schedules for the first $i$ days such that the total completion time of client $j$ on the first $i$ days is at most $K_j$ for all $j\in\{1,\ldots,n\}$. We show this by induction on $i$. For $i=0$ this is trivially true.
Now consider $T(i,K_1,\ldots,K_n)$. We look at all schedules $\sigma_i$ for day $i$. Now consider $T(i-1,\max\{-1,K_1-C_{i,1}\},\ldots,\max\{-1,K_n-C_{i,n}\})$, where $C_{i,j}$ is the completion time of client~$j$ on day $i$ when we use $\sigma_i$. 

($\Rightarrow$): If there exists a set of schedules $\mathcal{S}^\star$ for the first $i$ days such that the total completion time of client $j$ on the first $i$ days is at most $K_j$ for all $j\in\{1,\ldots,n\}$, then this set of schedules uses a particular schedule $\sigma^\star_i$ for day $i$ which is among the schedules for day $i$ we consider. Furthermore, the existence of $\mathcal{S}^\star$ implies that there is a set of schedules for the first $i-1$ days such that the total completion time of client $j$ on the first $i-1$ days is at most $K_j-C^\star_{i,j}$ for all $j\in\{1,\ldots,n\}$, where~$C^\star_{i,j}$ is the completion time of client~$j$ on day $i$ when we use $\sigma^\star_i$. Hence, we have by induction that $T(i-1,\max\{-1,K_1-C^\star_{i,1}\},\ldots,\max\{-1,K_n-C^\star_{i,n}\})=\texttt{true}$. It follows that $T(i,K_1,\ldots,K_n)=\texttt{true}$.

($\Leftarrow$): Now for the other direction assume that $T(i,K_1,\ldots,K_n)=\texttt{true}$. This means that there exists a schedule $\sigma_i$ for day $i$ such that $T(i-1,\max\{-1,K_1-C_{i,1}\},\ldots,\max\{-1,K_n-C_{i,n}\})=\texttt{true}$, where $C_{i,j}$ is the completion time of client~$j$ on day $i$ when we use $\sigma_i$. By induction we know that $T(i-1,\max\{-1,K_1-C_{i,1}\},\ldots,\max\{-1,K_n-C_{i,n}\})=\texttt{true}$ and that there exists a set of schedules for the first $i-1$ days such that the total completion time of client $j$ on the first $i-1$ days is at most $K_j-C_{i,j}$ for all $j\in\{1,\ldots,n\}$. Hence, we can combine this set of schedules with $\sigma_i$ to obtain a set of schedules for the first $i$ days such that the total completion time of client $j$ on the first $i$ days is at most $K_j$ for all $j\in\{1,\ldots,n\}$.

Finally, we analyze the running time of the algorithm. The table size is in $O(q\cdot K^n)$ and the time to fill in one entry is in $O(n!\cdot n)$. Hence, the overall running time is $O(n!\cdot K^n\cdot n\cdot q)$.
\end{proof}

Next we consider the case of a constant number of days, $q$. We do not know whether we can we get an analogous polynomial-time result of Theorem~\ref{thm:unaryfewdays} in case $q=O(1)$. However, we can exclude the possibility of an FPT algorithm with respect to $q$ (under standard parameterized complexity assumptions). For this, we present a W[1]-hardness result by constructing a parameterized reduction from the \textsc{Unary Bin Packing} problem, which is known to be W[1]-hard when parameterized by the number of bins~$b$~\cite{jansen2013bin}.

\begin{definition}
\label{def:binpack}
\textsc{Unary Bin Packing}: Given a set $I=\{1, \ldots, |I|\}$ of items with unary encoded sizes $s_j$ for $j\in I$, and $b$ bins of size $B$, is it possible to distribute the items to the bins such that no bin is overfull, i.e., the sum of the sizes of items put into the same bin does not exceed $B$?
\end{definition}

\begin{theorem}
\label{thm:W1h}
The \CESabbrv problem is W[1]-hard when parameterized by the number of days~$q$ even if all processing times are encoded in unary. 
\end{theorem}

\begin{proof}
Given an instance of \textsc{Unary Bin Packing} (where w.l.o.g.\ we assume that $b\ge 2$), we construct an instance of \CESabbrv with $n=|I|+b$ clients and $q=2b$ days as follows.
\begin{itemize}
\item For each bin $x$ we add two days $x_1$ and $x_2$ and one client $x$. We set $p_{x_1,x}=B\cdot(b^3-b^2-b)$ and $p_{x_2,x}=B\cdot b^2$.
\item For each item $j$ we create one client with $p_{x_2,j}=s_j$ for all $1\le x\le b$.
\item For each combination of a client $j$ and a day $i$ where we have not specified $p_{i,j}$ above, we set $p_{i,j}=0$.
\item We set $K=B\cdot (b^3-b+1)$.
\end{itemize}
The \CESabbrv instance can clearly be computed in polynomial time.

\emph{Correctness.} We now show that the \textsc{Unary Bin Packing} instance is a yes-instance if and only if the constructed \CESabbrv instance is a yes-instance.

($\Rightarrow$): Assume that the given \textsc{Unary Bin Packing} instance is a yes-instance. Then there is a partition of the items $I$, let it be denoted $I_1, \ldots, I_b$ such that $\sum_{j\in I_x}s_j=B$ for all $1\le x\le b$. We schedule the jobs in the following way. 

For all $1\le x\le b$, we schedule the job of client $x$ on day $x_1$. On day $x_2$ we first schedule all clients $j$ with $j\in I_x$ in an arbitrary order, then we schedule the job of client $x$, and then we schedule jobs of the clients $j$ with $j\in I\setminus I_x$ in an arbitrary order.

Note that above we specified how to schedule all jobs with non-zero processing times. All remaining jobs are scheduled at the beginning (first) on their respective days in arbitrary order. Let this schedule be called $\mathcal{S}$.

For all $1\le x\le b$ we have that the total completion time of client $x$ is $C_x(\mathcal{S})\le p_{x_1,x}+p_{x_2,x}+B=K$. For all clients $j$ with $j\in I_x$ we have that on day $x_2$ the completion time is at most $B$, whereas on days $y_2$ with $1\le y\le b$ and $y\neq x$ the completion time can be as large as~$B\cdot (b^2+b)$ (note that $\sum_{j\in I}s_j\le b\cdot B$). It follows that $C_j\le B + (b-1)\cdot B\cdot (b^2+b)=K$. 

($\Leftarrow$): Assume that the constructed \CESabbrv instance is a yes-instance and we have a $K$-fair set $\mathcal{S}$ of schedules. We now distribute the items as follows.
For $1\le x\le b$ if client $j$ for some $j\in I$ is scheduled before client $x$ on day $x_2$, we put item $j$ into bin~$x$. If this results in an item being in multiple bins, then we remove the item from all but one bin. 

Assume for contradiction that some bin $x$ is overfull, then the total completion time of client~$x$ is larger than $K$, a contradiction. Now assume for contradiction that there is an item~$j\in I$ that is not contained in any bin. This means that for all $1\le x\le b$ we have that the completion time of client~$j$ on day $x_2$ is at least $B\cdot b^2$. Hence the total completion time of client~$j$ is at least $B\cdot b^3>K$, again a contradiction.
\end{proof}

Our next result complements nicely the hardness result of Theorem~\ref{thm:W1h} above. We show that that \CES admits an  FPT algorithm when \emph{both} the number of days $q$ and the fairness threshold $K$ are taken as a parameter. Specifically, the algorithm below shows that \CES is polynomial-time solvable when both $q$ and $K$ are bounded by some constants. The case where $q=O(1)$ and all processing times are bounded by some constant remains open.  

Our algorithm is essentially a reduction to an $n$-fold ILP~\cite{de2008n,hemmecke2013n,cslovjecsek2021block,jansen2020near,EisenbrandHK18,KouteckyLO18}. Such ILPs have been used for solving a number of different scheduling problems in FPT time~\cite{knop2018scheduling,jansen2021empowering,chen2017parameterized,jansen2020approximation}. In the following, we define for some $n \in \mathbb{N}$ the $n$-fold product of two given integer matrices. Let $A_1 \in \mathbb{N}^{\ell_1 \times s}$ and $A_2 \in \mathbb{N}^{\ell_2 \times s}$ be two integer matrices with the same number of columns~$s$. The \emph{$n$-fold product} of $A = \left( \substack{A_1\\A_2}\right)$, denoted $A^{(n)}$, is defined as the matrix
\[
A^{(n)} =
\underbrace{\begin{bmatrix}
A_1 & A_1 & \ldots & A_1 \\
A_2 & 0 & \ldots & 0 \\
0 & A_2 & \ldots & 0 \\
\vdots & \vdots & \ddots & \vdots \\
0 & 0 & \ldots & A_2
\end{bmatrix}}_{n \text{ times}}.
\]
We call the submatrices composed of $s$ consecutive columns containing a full $A_1$ matrix ``blocks''. Due to \citet{cslovjecsek2021block} 
it is known that a feasible solution $x\in\mathbb{N}^{sn}$ to the system of inequalities $A^{(n)}x \geq b$ for some $b \in \mathbb{Z}^{sn}$ can be found in $2^{O(\ell_1 \ell_2^2)} (\ell_1 \ell_2\Delta)^{O(\ell_1^2 \ell_2 +\ell_2^2)}(sn)^{1+o(1)}$ time, where $\Delta=||A^{(n)}||_\infty$ is the largest absolute value of any coefficient in $A^{(n)}$.  

\begin{theorem}\label{thm:nfold}
The \CES problem is fixed-parameter tractable when parameterized by the number of days~$q$ and the fairness parameter $K$. 
\end{theorem}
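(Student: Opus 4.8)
The plan is to formulate \CESabbrv as an $n$-fold integer program with one brick per client and to invoke the result of Hemmecke et al.\ quoted above. First I would record two simplifying observations that bound all quantities by functions of the parameters. Since $\sum_i C_{i,j}\le k$ and every completion time is nonnegative, in any solution we have $C_{i,j}\le k$ for all $i,j$; in particular, if some $p_{i,j}>k$ then no solution exists, so we may assume $p_{i,j}\le k$ throughout. Consequently every relevant completion time lies in $\{0,1,\dots,k\}$, which is exactly what bounds both the number of variables per brick and the largest coefficient of the program. I would also discard all jobs with $p_{i,j}=0$: scheduling them first on their day gives them completion time $0$ and contributes nothing to any client's total, so the modelled jobs are exactly the nonzero ones, whose completion times on a given day are automatically distinct.

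For each client $j$, each day $i$ on which it has a nonzero job, each processing-time value $p\in\{1,\dots,k\}$, and each completion time $c\in\{p,\dots,k\}$, I would introduce a binary variable $x_{j,i,p,c}$ read as ``client $j$'s job on day $i$ has processing time $p$ and completion time $c$.'' Brick-specific variable bounds fix $x_{j,i,p,c}=0$ unless $p=p_{i,j}$, so that the only free choice is the completion time. The matrix $A_2$ collects the per-brick (local) constraints: for each relevant day, $\sum_{p,c} x_{j,i,p,c}=1$ (one completion time is chosen), and the single equitability constraint $\sum_{i,p,c} c\,x_{j,i,p,c}\le k$. Both have coefficients at most $k$, and there are $\mathrm{poly}(m,k)$ of them.

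The matrix $A_1$ holds the global constraints, each of the required form ``$\sum_j(\text{per-brick linear form})\ge b$'', and these must certify that the chosen completion times on each day come from an actual single-machine permutation with no idle time. For every day $i$ and every $t\in\{1,\dots,k\}$ I would impose $\sum_{j,p} x_{j,i,p,t}\le 1$ (at most one job finishes at time $t$), $\sum_{j,p}\sum_{c\le t} p\,x_{j,i,p,c}\le t$ (the jobs finishing by $t$ fit into $[0,t]$), and $\sum_{j,p}\sum_{c\le t} p\,x_{j,i,p,c}\ge t\cdot\sum_{j,p} x_{j,i,p,t}$ (if a job finishes exactly at $t$, the machine is never idle before $t$). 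Writing the chosen completion times in increasing order $c_{(1)}<c_{(2)}<\dots$, the last two families force the load processed by time $c_{(\ell)}$ to equal $c_{(\ell)}$, hence $c_{(\ell)}=\sum_{s\le\ell}p_{(s)}$; this is precisely the statement that the values are the completion times of a valid no-idle schedule, and conversely every such schedule satisfies the constraints. I expect verifying this equivalence to be the one genuinely technical step.

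Finally I would check the $n$-fold shape and the parameter bounds. Every global constraint is a sum over the bricks $j$ of one and the same per-brick linear form, so it fits the row pattern $[A_1\mid\cdots\mid A_1]$; the point of indexing variables by the value $p$ rather than writing $p_{i,j}$ directly is exactly that the coefficients $p$ and $-t$ then depend only on the column index $(i,p,c)$ and not on the brick $j$, as $A_1$ must be identical across bricks. This uniform-coefficient requirement is the main obstacle, and the indexing-plus-bounds device is how I would overcome it. Counting, there are $s=O(mk^2)$ variables per brick, $\ell_1=O(mk)$ global and $\ell_2=\mathrm{poly}(m,k)$ local constraints, and $a=\lVert A^{(n)}\rVert\le k$; the quoted running time $a^{O(s\ell_1\ell_2+s^2\ell_2)}n^3N$ is therefore $f(m,k)\cdot n^{O(1)}$ for some computable $f$, establishing fixed-parameter tractability in $m+k$.
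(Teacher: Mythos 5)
Your proposal is correct and follows the same overall route as the paper: an $n$-fold IP with one brick per client, binary variables selecting a completion time in $\{0,\dots,k\}$ for each client--day pair, local constraints enforcing that exactly one value is chosen per day and that $\sum_i C_{i,j}\le k$, global constraints enforcing per-day schedulability, and the theorem of Hemmecke et al.\ to solve the program in FPT time. The one genuine divergence is the device used to make the global matrix $A_1$ identical across bricks despite the brick-dependent processing times $p_{i,j}$ --- which is indeed, as you say, the crux. The paper introduces auxiliary variables $y^{(j)}_{i,c}$ that local constraints force to equal $p_{i,j}$ exactly when completion time $c$ is selected, so that $A_1$ only needs coefficient $1$ on the $y$-variables; you instead index the variables by the processing-time \emph{value} $p$ and zero out all columns with $p\neq p_{i,j}$ via brick-specific variable bounds (legitimate in the $n$-fold framework, which allows brick-dependent box constraints and right-hand sides; if one insists on the bare form quoted in the paper, the same effect is achieved by local constraints with brick-specific right-hand sides). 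Both devices are sound. A second, smaller difference: your global system --- distinctness (a), load upper bounds (b), and no-idle lower bounds (c) --- characterizes the valid completion-time vectors \emph{exactly}, whereas the paper keeps only the upper-bound family $\sum_{j,c'\le c} y^{(j)}_{i,c'}\le c$ and argues in the reverse direction that scheduling jobs in order of their selected values yields completion times dominated by those values, i.e., a relaxation that suffices. Your exact encoding costs more variables per brick ($O(mk^2)$ versus $O(mk)$) and a slightly longer equivalence argument (note that (a) is actually implied by (b) and (c)), but both formulations land in a running time of the form $f(m,k)\cdot n^{O(1)}\cdot N$, so fixed-parameter tractability follows either way.
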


\begin{proof}
We use an $n$-fold IP where, roughly speaking, we have one ``block'' in $A^{(n)}$ for every client. We use the ``local'' constraints in $A_2$ to ensure that the total completion time for each client is at most $k$ and we use the ``global'' constraints to ensure that the implicitly produced set of schedules is feasible. Note that we do not use a specific objective function. We will show that the input \CESabbrv instance is a yes-instance if and only if the constructed $n$-fold IP has a feasible solution.

Our $n$-fold IP uses the following variables:
\begin{itemize}
    \item $x^{(j)}_{i,c}$ with $j\in\{1,\ldots,n\}$, $i\in\{1,\ldots,q\}$, and $c\in\{0,\ldots,K\}$. 
    
    Intuitively, we want $x^{(j)}_{i,c}$ to be a 0-1 variable that is 1 if and only if client~$j$ has completion time $c$ on day $i$.
    \item $y^{(j)}_{i,c}$ with $j\in\{1,\ldots,n\}$, $i\in\{1,\ldots,q\}$, and $c\in\{0,\ldots,K\}$. 
    
    Intuitively, we want $y^{(j)}_{i,c}$ to be equal to $p_{i,j}$ if $x^{(j)}_{i,c}=1$ and equal to zero otherwise.
\end{itemize}
\noindent We first describe $A_2$, which contains the following local constraints:
\begin{enumerate}
    \item $\forall i : \sum_c x^{(j)}_{i,c} = 1$ 
    
    Intuitively, this set of constraints makes sure that exactly one completion time per day is picked for client $j$.
    \item $\sum_{i,c} c\cdot x^{(j)}_{i,c}\le K$
    
    Intuitively, this constraint makes sure that the total completion time of client $j$ is at most~$K$.
    \item $\forall i : \sum_c y^{(j)}_{i,c} = p_{i,j}$
    \item $\forall i,c : y^{(j)}_{i,c} - c\cdot x^{(j)}_{i,c} \le 0$
    
    Intuitively, the above two sets of constraints make sure that $y^{(j)}_{i,c}=p_{i,j}$ if $x^{(j)}_{i,c}=1$, and $y^{(j)}_{i,c}=0$ otherwise.
\end{enumerate}
\noindent Now we describe $A_1$, containing the global constraints:
\begin{itemize}
    \item $\forall i,c : \sum_{j,c'\le c}y^{(j)}_{i,c'}\le c$
    
    Intuitively, this set of constraints makes sure that on day $i$ all jobs that we want to finish before $c$ can actually be scheduled before $c$.
\end{itemize}
The number of columns $s$ in $A_1$ and $A_2$, 
the number $\ell_1$ of $A_1$ constraints, and the number $\ell_2$ of $A_2$ constraints, are each in $O(q\cdot K)$. The largest number $\Delta$ in $A_1,A_2$ is in $O(K)$. Hence by~\citet{cslovjecsek2021block}, assuming correctness of our $n$-fold IP, we get the claimed fixed-parameter tractability result. In the remainder of the proof, we show that our constructed $n$-fold IP is correct, that is, the input \CESabbrv-instance is a yes-instance if and only if the $n$-fold IP has a feasible solution.

($\Rightarrow$): Assume the input \CESabbrv-instance is a yes-instance and we have a $K$-fair set of schedules $\mathcal{S}$. We now construct a feasible solution for our $n$-fold IP. We set $x^{(j)}_{i,c}=1$ and $y^{(j)}_{i,c}=p_{i,j}$ if $C_{i,j}=c$ and $x^{(j)}_{i,c}=y^{(j)}_{i,c}=0$ otherwise. It is straightforward to see that this fulfills the first and the third local set of constraints. The second local constraint is fulfilled since $C_j\le K$ for all clients~$j$. The fourth set of constraints is fulfilled since we have $C_{i,j}\ge p_{i,j}$ for all~$i,j$. Finally, assume for contradiction that a global constraint is violated, say for $i,c$. Then we have that $\sum_{j,c'\le c}y^{(j)}_{i,c'}> c$. By construction we know that if a variable $y^{(j)}_{i,c'}$ is non-zero, then we have $y^{(j)}_{i,c'}=p_{i,j}$ and $C_{i,j}=c'$. Now consider the largest~$c'$ for which there is a client $j$ such that $y^{(j)}_{i,c'}=p_{i,j}$. We can assume w.l.o.g.\ that~$c'=c$. Let $j^\star$ be the client for which we have $C_{i,j^\star}=c$. Then for all clients $j'$ with $C_{i,j'}<c$ we have that their job is scheduled earlier than the one of $j^\star$ on day $i$. Hence, we can write $C_{i,j^\star}=p_{i,j^\star}+\sum_{j'\mid C_{i,j'}<c}p_{i,j'}$. We can rewrite this as $C_{i,j^\star} = \sum_{j\mid C_{i,j}=c'\le c} y^{(j)}_{i,c'}$. Since $y^{(j)}_{i,c}=0$ if $C_{i,j}\neq c$, we can write $C_{i,j^\star} = \sum_{j,c'\le c}y^{(j)}_{i,c'}$. Together with the assumption that  $\sum_{j,c'\le c}y^{(j)}_{i,c'}> c$ this implies that $C_{i,j^\star}>c$, a contradiction.

($\Leftarrow$): Assume we have a feasible solution for the constructed $n$-fold IP. We create a set of schedules $\mathcal{S}$ as follows. For every day $i$ and client $j$ there is exactly one $c_{i,j}$ such that $x^{(j)}_{i,c_{i,j}}=1$, otherwise the first local constraint would be violated. For day $i$, we order the jobs of the clients by their corresponding $c_{i,j}$-values. If jobs of several clients have the same $c_{i,j}$-value, we order them arbitrarily. We use this order as a schedule for day $i$, that is, if $c_{i,j} < c_{i,j'}$, then the job of client $j$ is scheduled before the job of client $j'$. We claim that this yields a $K$-fair set of schedules.

We first show that $C_{i,j}\le c_{i,j}$ for all $i,j$. Assume for contradiction that $C_{i,j}> c_{i,j}$ for some~$i,j$. Then we have that $C_{i,j}=\sum_{j'\mid c_{i,j'}\le c_{i,j}}p_{i,j'}>c_{i,j}$. Note that we have that $y^{(j)}_{i,c}=p_{i,j}$ if $x^{(j)}_{i,c}=1$ and $y^{(j)}_{i,c}=0$ otherwise. If this is not the case, then either the third or the fourth local constraint is violated. This means that we can rewrite $C_{i,j}$ as $C_{i,j}=\sum_{j'\mid c_{i,j'}\le c_{i,j}}y^{(j')}_{i,c_{i,j'}}=\sum_{j', c\le c_{i,j}}y^{(j')}_{i,c}$. However, this implies that $C_{i,j}\le c_{i,j}$, otherwise the global constraint would be violated, a contradiction. Finally, we have for every $j$ that $C_j = \sum_i C_{i,j}\le \sum_i c_{i,j} = \sum_{i,c} c\cdot x^{(j)}_{i,c}\le K$, where the last inequality follows from the assumption that the second local constraint is not violated. This finishes the proof.
\end{proof}

\section{The \WESbf Problem}
\label{sec:WES}

In the following section we focus on the waiting time objective, and specifically the \WES problem. Recall that $W_{i,j}=C_{i,j}-p_{i,j}$, and that our goal is to find $q$ schedules such that $\sum_i W_{i,j} \leq K$ for each client $j$. This objective is quite similar to the completion time objective, however as we will see, it is somewhat harder to deal with. In particular, all hardness results of \CES carry over with various modifications. However, we do not know whether \WES is polynomial time solvable in the two day case; thus, the problem is open for both $q=2$ and $q=3$. 

Below we discuss how to modify all our results from Section~\ref{sec:CES}, apart from Theorem~\ref{thm:2days}, to the \WES problem. 

\subsection{NP-hardness results}

We begin with showing that \WES is weakly NP-hard for all $q \geq 4$ and all $n \geq2$, by slightly altering the proofs of \cref{thm:fewdaysNPh} and \cref{thm:fewclientsNPh}.

\begin{theorem}
\label{thm:fewdaysNPh2}
The \WES problem is weakly NP-hard for all $q\ge 4$. 
\end{theorem}
\begin{proof}
The reduction we present here is very similar to the one presented in the proof of \cref{thm:fewdaysNPh}. 
Given an instance $I$ of \textsc{Partition} we construct an instance for the \WES problem with $n=|A|+4$ and $q=4$ in the following way. 
\begin{itemize}
\item We create two clients $x,x'$. We set $p_{1,x}=8B$, $p_{2,x}=4B$, and $p_{3,x}=p_{4,x}=0$, and we set $p_{1,x'}=6B$ and $p_{2,x'}=p_{3,x'}=p_{4,x'}=0$.
\item We create two clients $y,y'$. We set $p_{1,y}=p_{2,y}=0$, and $p_{3,y}=8B$, $p_{4,y}=4B$, and we set $p_{1,y'}=p_{2,y'}=p_{4,y'}=0$ and $p_{3,y'}=6B$.
\item For each $a_j\in A$ we create one client $j$ with $p_{2,j}=p_{4,j}=a_j$ and $p_{1,j}=p_{3,j}=0$.
    \item We set $K=7B$.
\end{itemize}
To obtain the result for more than four days, we can add additional days where all clients have jobs with processing time zero.
The described \WES instance can clearly be computed in polynomial time.

\emph{Correctness.} We now show that the \textsc{Partition} instance is a yes-instance if and only if the constructed \WES instance is a yes-instance.

($\Rightarrow$): Assume that the given \textsc{Partition} instance is a yes-instance. Then there exists a partition of $A$ into $A_1$ and $A_2$ such that $\sum_{a_j\in A_1}a_j=\sum_{a_j\in A_2}a_j=B$. Then, we schedule the jobs in the following way. 
\begin{itemize}
\item On day 1, we first schedule the job of client $x'$ and then we schedule the job of client $x$.
\item On day 2, we first schedule the jobs of clients $j$ with $a_j\in A_1$ in an arbitrary order. Then we schedule the job of client $x$. Afterwards, we schedule the jobs of clients $j$ with $a_j\in A_2$ in an arbitrary order.
\item On day 3, we first schedule the job of client $y'$ and then we schedule the job of client $y$.
\item On day 4, we first schedule the jobs of clients $j$ with $a_j\in A_2$ in an arbitrary order. Then we schedule the job of client $y$. Afterwards, we schedule the jobs of clients $j$ with $a_j\in A_1$ in an arbitrary order.
\end{itemize}

Note that above we specified how to schedule all jobs with non-zero processing times. All remaining jobs are scheduled first on their respective days in arbitrary order.

The total waiting times of clients $x'$ and $y'$ is zero. The total waiting times of clients $x$ and $y$ are each exactly $K$, since we know that $\sum_{a_j\in A_1}a_j=\sum_{a_j\in A_2}a_j = B$. Now assume that there is a client~$j'$ with $a_{j'}\in A_1$ that is scheduled last among all clients~$j$ with $a_j\in A_1$ on both days 2 and~4. The total completion time of client~$j'$ is exactly $K$, which is not greater than its waiting time. It follows that the total waiting times of all other clients~$j$ with $a_j\in A_1$ are all also at most $K$. By symmetry, we also have that the total completion time of all clients $j$ with $a_j\in A_2$ is at most $K$.

($\Leftarrow$): Assume that the constructed \WES instance is a yes-instance and we have a $K$-fair set of schedules $\mathcal{S}$. Consider day 2. For all clients $j$ whose jobs are scheduled earlier than the job of client $x$ on day 2, we put $a_j$ into $A_1$. For all remaining clients~$j$, we put $a_j$ into $A_2$. Assume for contradiction that $\sum_{a_j\in A_1}a_j\neq\sum_{a_j\in A_2}a_j$. Consider first the case that $\sum_{a_j\in A_1}a_j>\sum_{a_j\in A_2}a_j$.
Note that the job of client $x$ needs to be scheduled second on day 1, otherwise the total waiting time of client $x'$ is larger than $K$.
It follows that $W_x(\mathcal{S})=6B+\sum_{a\in A_1}a>7B=K$, a contradiction. Now consider the case that $\sum_{a\in A_1}a<\sum_{a\in A_2}a$. Assume for contradiction that all clients~$j$ with $a_j\in A_2$ have their job scheduled earlier than client~$y$ on day 4. Then by a similar argument as above we get that $W_y(\mathcal{S})>7B=K$, a contradiction. Hence we have that there is a client $j$ that has its job scheduled after client $x$ on day 2 and after client $y$ on day 4. However, then we have that $W_j(\mathcal{S})>8B>K$, a contradiction.
\end{proof}
 
\begin{corollary}
\label{cor:fewclientsNPh2}
The \WES problem is weakly NP-hard for all $n\ge 2$. 
\end{corollary}
\begin{proof}
To obtain the result, we use the same reduction as in the proof of \cref{thm:fewclientsNPh} with only one small modification. Instead of setting $K=3B$ we set $K=B$. The correctness follows from an analogous argument as in the proof of \cref{thm:fewclientsNPh}.
\end{proof}

We next alter the proof of \cref{thm:strNPh} so that it fits the $\sum_i W_{i,j}$ objective, showing that \WES is strongly NP-hard for constant processing times and fairness threshold.   

\begin{theorem}
\label{thm:strNPh2}
The \WES problem is strongly NP-hard even if $K\in O(1)$.
\end{theorem}
\begin{proof}
The reduction we present here is very similar to the one presented in the proof of \cref{thm:strNPh}. 
Given an instance for the \textsc{Exact (3,4)-SAT}, we construct an instance of \WES in the following way. 
\begin{itemize}
\item For each variable $x$ that appears in $\phi$, we create four clients $x, x', x^T, x^F$ and two days~$x_1,x_2$. We set processing times $p_{x_1,x}=37$, $p_{x_2,x}=30$, $p_{x_1,x'}=35$ and $p_{x_2,x^T}=p_{x_2,x^F}=1$.
\item For each clause $c$ that appears in $\phi$, we create two clients $c,c'$ and two days $c_1,c_2$. We set $p_{c_1,c}=37$, $p_{c_2,c}=6$, and $p_{c_1,c'}=34$. Let $x,y,z$ be the three variables appearing in $c$. If $x$ appears non-negated, then we set $p_{c_2,x^T}=1$, otherwise we set $p_{c_2,x^F}=1$. Analogously for $y$ and~$z$.
\item For each combination of a client $j$ and a day $i$ where we have not specified $p_{i,j}$ above, we set $p_{i,j}=0$.
\item We set $K=36$.
\end{itemize}
This finishes the construction of the \WES instance and can clearly be done in polynomial time. 

\emph{Correctness.} We now show that $\phi$ is satisfiable if and only if the constructed \WES instance is a yes-instance.

($\Rightarrow$): Assume that $\phi$ is satisfiable and we have a satisfying assignment for the variables in~$\phi$. We construct a $K$-fair schedule as follows.
\begin{itemize}
    \item For each variable $x$ we schedule the job of client $x'$ first on day $x_1$ and we schedule the job of client $x$ second on day $x_1$. If $x$ is set to \texttt{true} in the satisfying assignment, then on day $x_2$ we schedule the jobs of client $x_T$ first, the job of client~$x$ second, and the job of client $x^F$ third. If $x$ is set to \texttt{false} in the satisfying assignment, then on day $x_2$ we schedule the jobs of client $x_F$ first, the job of client~$x$ second, and the job of client $x^T$ third.
    \item For each clause $c$ we schedule the job of client $c'$ first on day $c_1$ and the job of client $c$ second on day $c_1$. Let $x,y,z$ be the variables in $c$ and let w.l.o.g.\ $x$ be a variable that is set to a truth-value that satisfies $c$. We schedule on day~$c_2$ the jobs of client $y^T$ (resp.\ $y^F$) first, the job of client $z^T$ (reps.\ $z^F$) second, the job of client~$c$ third, and the job of client $x^T$ (resp.\ $x^F$) fourth, depending on whether the variables appear negated or non-negated in $c$ (say, $x$ appears non-negated, then we schedule~$x^T$, otherwise we schedule $x^F$).
    \item Note that above we specified how to schedule all jobs with non-zero processing times. All remaining jobs are scheduled first (that is, before any job with non-zero processing time) on their respective days in arbitrary order.
\end{itemize}
Let the above described set of schedules be called $\mathcal{S}$. Assume for contradiction that $\mathcal{S}$ is not a $K$-fair set of schedules. Then there is a client $j$ with $W_j(\mathcal{S})>K$. It is easy to observe that for all variables $x$ we have that $W_x(\mathcal{S})=K$ and $W_{x'}(\mathcal{S})=0$ and for all clauses $c$ we have that $W_c(\mathcal{S})=K$ and $W_{c'}(\mathcal{S})=0$. It follows that $j=x^T$ or $j=x^F$ for some variable $x$.

Assume w.l.o.g\ that $j=x^T$ (the case that $j=x^F$ is analogous). We know that there are four clauses $a,b,c,d$ in $\phi$ that contain variable $x$. Note that for $i\notin\{x_2,a_2,b_2,c_2,d_2\}$ we have that $p_{i,x^T}= 0$. We claim that $W_{x^T}(\mathcal{S})>K$ implies that the job of client $x^T$ is scheduled last on day $x_2$ and on at least one of the days $a_2,b_2,c_2,d_2$. Assume that the job of client~$x^T$ is scheduled first on day $x_2$. Note that the total processing time of all jobs that do not belong to client $x^T$ on each of the days $a_2,b_2,c_2,d_2$ is $8$. This means that the total waiting time of client $x^T$ is at most~$K=36$. Now assume that the job of client $x^T$ is scheduled last on day $x_2$ and not scheduled last on any of the days $a_2,b_2,c_2,d_2$. Then the total waiting time of $x_2$ is again at most~$K=36$. Hence, we have that the job of client $x^T$ is scheduled last on day $x_2$ and w.l.o.g.\ also scheduled last on day~$a_2$. This, however, is a contradiction since the job of client $x^T$ being scheduled last on day $x_2$ means (by construction of~$\mathcal{S}$) that in the satisfying assignment $x$ is set to \texttt{false} but the job of client $x^T$ being scheduled last on day $a_2$ means (by construction of $\mathcal{S}$) that $x$ appears non-negated in clause~$a_2$ and~$a_2$ is satisfied by~$x$.

($\Leftarrow$): Assume that the constructed \WES instance is a yes-instance and we have a $K$-fair set of schedules $\mathcal{S}$. We construct a satisfying assignment for $\phi$ as follows.

For every variable $x$, if the job of client $x^T$ is scheduled before the job of client $x^F$ on day~$x_2$, then we set $x$ to \texttt{true}. Otherwise, we set $x$ to \texttt{false}.

Assume for the sake of contradiction that the above described assignment is not satisfying for~$\phi$. Then there is a clause $c$ that is not satisfied. Let $x,y,z$ be the variables appearing in $c$ and let them w.l.o.g.\ all appear non-negated. If $c$ is not satisfied, then we set all three variables to \texttt{false}. Now consider day $c_2$. We assume w.l.o.g.\ that the jobs of all clients that have zero processing time on day $c_2$ are scheduled first in an arbitrary order. First, note that the job of client $c$ cannot be scheduled first on day $c_1$, since then the total waiting time of client $c'$ would be at least $37$. Now note that the job of client $c$ cannot be scheduled last on day $c_2$, since then again the total waiting time of client $c$ would be at least~$37$, a contradiction to the assumption that $\mathcal{S}$ is a $K$-fair schedule. It follows that w.l.o.g.\ the job of client~$x^T$ is scheduled last on day $c_2$. Now consider day $x_2$. Again, we assume w.l.o.g.\ that the jobs of all clients that have zero processing time on day $x_2$ are scheduled first in an arbitrary order. First, note that the job of client $x$ cannot be scheduled first on day $x_1$, since then the total waiting time of client $x'$ would be at least $37$. Now note that the job of client~$x$ cannot be scheduled last on day $x_2$, since then again the total waiting time of client~$x$ would be at least $37$, a contradiction to the assumption that $\mathcal{S}$ is a $K$-fair set of schedules. It follows that the job of client $x^T$ is scheduled last on day $x_2$. However, now we have that the total completion time of client $x^T$ is at least $37$, a contradiction.
\end{proof}

\subsection{Small processing times}

Observe that the total waiting time $\sum_i W_{i,j}$ of a client $j$ equals  the total completion time $\sum_i C_{i,j}$ of client $j$ minus the total processing time $\sum_i p_{i,j}$ of all his jobs. Hence, the total completion time of client $j$ is at most $K_j=K+\sum_i p_{i,j}$ if and only if their total waiting time is at most $K$. It follows that we can use the same dynamic program as described in the proof of \cref{thm:unaryfewdays} and read off the result at $T[q,K_1,K_2,\ldots,K_n]$. Thus, we get the following analog result for \WES.

\begin{corollary}
\label{cor:xp}%
The \WES problem is polynomial-time solvable when $n=O(1)$ and all processing times are encoded in unary.
\end{corollary}

\begin{theorem}
\label{thm:W1h2}
The \WES problem is W[1]-hard when parameterized by the number $q$ of days even if all processing times are encoded in unary. 
\end{theorem}
\begin{proof}
The reduction we present here is very similar to the one presented in the proof of \cref{thm:W1h}. 
Given an instance of \textsc{Unary Bin Packing} (where w.l.o.g.\ we assume that $b\ge 2$), we construct an instance of \WES with $n=|I|+2b$ clients and $q=2b$ days as follows.
\begin{itemize}
    \item For each bin $x$ we add two days $x_1$ and $x_2$ and two clients $x,x'$. We set $p_{x_1,x'}=B\cdot(b^3-2b^2-1)$, and we set $p_{x_1,x}=B\cdot b^3$ and $p_{x_2,x}=B\cdot (b^2-2b+1)$.
    \item For each item $j$ we create one client with $p_{x_2,j}=s_j$ for all $1\le x\le b$.
    \item For each combination of a client $j$ and a day $i$ where we have not specified $p_{i,j}$ above, we set $p_{i,j}=0$.
    \item We set $K=B\cdot (b^3-2b^2)$.
\end{itemize}
The \WES instance can clearly be computed in polynomial time.

\emph{Correctness.} We now show that the \textsc{Unary Bin Packing} instance is a yes-instance if and only if the constructed \WES instance is a yes-instance.

($\Rightarrow$): Assume that the given \textsc{Unary Bin Packing} instance is a yes-instance. Then there is a partition of the items $I$, let it be denoted $I_1, \ldots, I_b$ such that $\sum_{j\in I_x}s_j=B$ for all $1\le x\le b$. Then we schedule the jobs in the following way. 

For all $1\le x\le b$, we schedule the job of client $x'$ first on day $x_1$ and the job of client $x$ second. On day $x_2$ we first schedule all clients $j$ with $j\in I_x$ in an arbitrary order, then we schedule the job of client $x$, and then we schedule jobs of the clients $j$ with $j\in I\setminus I_x$ in an arbitrary order.

Note that above we specified how to schedule all jobs with non-zero processing times. All remaining jobs are scheduled at the beginning (first) on their respective days in arbitrary order. Let this schedule be called $\mathcal{S}$.

For all $1\le x\le b$ we have that the total waiting time of client $x$ is $W_x(\mathcal{S})\le p_{x_1,x'}+B=K$. The total waiting time of client $x'$ is zero. For all clients $j$ with $j\in I_x$ we have that on day $x_2$ the waiting time is at most $B$, whereas on days $y_2$ with $1\le y\le b$ and $y\neq x$ the waiting time can be as large as~$B\cdot (b^2-b+1)$ (note that $\sum_{j\in I}s_j\le b\cdot B$). It follows that $W_j\le B + (b-1)\cdot B\cdot (b^2-b+1)= K$. 

($\Leftarrow$): Assume that the constructed \WES instance is a yes-instance and we have a $K$-fair set $\mathcal{S}$ of schedules. We now distribute the items as follows.
For $1\le x\le b$ if client $j$ for some $j\in I$ is scheduled before client $x$ on day $x_2$, we put item $j$ into bin~$x$. If this results in an item being in multiple bins, then we remove the item from all but one bin. 

Observe that $\mathcal{S}$ need to schedule the job of client $x'$ before the job of $x$ on day $x_1$, otherwise the total waiting time of client $x$ would be larger than $K$. Assume for contradiction that some bin~$x$ is overfull, then again the total waiting time of client~$x$ is larger than $K$, a contradiction.
Now assume for contradiction that there is an item~$j\in I$ that is not contained in any bin. This means that for all $1\le x\le b$ we have that the waiting time of client~$j$ on day $x_2$ is at least $B\cdot (b^2-2b+1)$. Hence the total waiting time of client~$j$ is at least $B\cdot (b^3-2b^2+b)>K$, again a contradiction.
\end{proof}

\begin{corollary}\label{cor:nfold}
The \WES problem
is fixed-parameter tractable when parameterized by the number of days~$q$ and the fairness parameter $K$.
\end{corollary}
\begin{proof}
To obtain the result, we modify the $n$-fold IP presented in the proof of \cref{thm:nfold}. 

First, we make the following observation: If for some client $j$ and some day $i$ we have $p_{i,j}>K$, then the job of client $j$ needs to be scheduled last on day $i$, otherwise the total waiting time of the client whose job is scheduled after the job of client $j$ on day $i$ is larger than $K$. This also implies that if we have two clients $j,j'$ and some day $i$ such that $p_{i,j}>K$ and $p_{i,j'}>K$, we have a no-instance.

For each client $j$ we define an individual completion time bound $K_j$. Consider client $j$ and let $i_1, i_2, \ldots, i_\ell$ be the days where the job of client $j$ has a processing time that is larger than $K$. By the observation above we know that the job of client $j$ needs to be scheduled last on each of these days. 
Furthermore, observe that the total completion time of a client $j$ equals the total waiting time of client $j$ plus the sum of processing times of all their jobs.
We set
\[
K_j=K + \sum_{i\notin\{i_1,i_2,\ldots, i_\ell\}}p_{i,j} - \sum_{i\in\{i_1,i_2,\ldots, i_\ell\}, j'\neq j} p_{i,j'}.
\]
The value $K_j$ now upper-bounds the total completion time of client $j$, over all days when the processing time of their job is at most $K$, assuming their total waiting time is at most $K$.

Now we modify the $n$-fold IP presented in the proof of \cref{thm:nfold}. We use it to determine how to schedule all jobs of clients where we do not already know due to the observations above that they need to be scheduled last.

We modify the first local constraint as follows.
\[
\sum_c x^{(j)}_{i,c} = 1 \text{ if } p_{i,j}\le K \text{, otherwise } \sum_c x^{(j)}_{i,c} = 0.
\]
This modified constraint makes sure we do not schedule jobs on those days we already know that the job needs to be scheduled last. Accordingly, we also need to modify the third local constraint.
\[
\sum_c y^{(j)}_{i,c} = p_{i,j} \text{ if } p_{i,j}\le K \text{, otherwise } \sum_c y^{(j)}_{i,c} = 0.
\]
We modify the second local constraint as follows.
\[
\sum_{i,c} c\cdot x^{(j)}_{i,c}\le K_j.
\]

The correctness of the modified $n$-fold IP can be shown in an analogous way as in the proof of \cref{thm:nfold}. Note that we have that $K_j\in O(K\cdot q)$ for all $j$, and hence the largest number $\Delta$ in $A_1,A_2$ is of $O(K\cdot q)$ and we obtain the claimed fixed-parameter tractability result.
\end{proof}




\section{The \LESbf Problem}

We next consider the \LES problem. Recall that in this problem the job of each client $j$ on day $i$ has a due date $d_{i,j}$ and a processing time $p_{i,j}$. We want to determine whether there exists a solution such that the total lateness $\sum_i L_{i,j}=\sum_i(C_{i,j}-d_{i,j})$ of each client $j$ over all $q$ days is at most $K$. Observe that if all due dates $d_{i,j}$ are equal to zero, then \LES is equivalent to the \CES problem. Thus, all hardness results for \CES immediately carry over to \LES. Unfortunately, we cannot extend many of the positive results to \LES. In fact, the only algorithm that extends to the lateness case is when all processing times are unary and the number of clients is constant. 

\begin{corollary}\label{cor:xp}
The \LES is polynomial-time solvable when $n=O(1)$ and all processing times are encoded in unary.
\end{corollary}
\begin{proof}
Observe that the total completion time of client $j$ is at most $K_j=K+\sum_i d_{i,j}$ if its total lateness is at most $K$. It follows that we can use the same dynamic program as described in the proof of \cref{thm:unaryfewdays} and have the result at $T[q,K_1,K_2,\ldots,K_n]$. 
\end{proof}

Regarding the huge processing times case, as opposed to \CES and \WES, we can also show that the problem is NP-hard even when there are only two days in scheduling horizon. 

\begin{theorem}
\label{thm:twodays}
The \LES problem is weakly NP-hard for $q=2$. 
\end{theorem}

\begin{proof}
We provide a polynomial reduction from the \textsc{Partition} problem (see Definition~\ref{def:partition}). Given an instance $I$ of \textsc{Partition}, we construct an instance with $n=|A|+2$ and $q=2$ for the \LES problem.
\begin{itemize}
    \item We create one client $x$ with $p_{1,x}=3B$, $p_{2,x}=0$ and $d_{1,x}=d_{2,x}=0$.
    \item We create one client $y$ with $p_{1,y}=0$, $p_{2,y}=3B$ and $d_{1,y}=d_{2,y}=0$.
    \item For each $a_j\in A$ we create one client $j$ with $p_{i,j}=a_j$ for $i=1,2$ and $j=1,\ldots,|A|$. We set $d_{1,j}=d_{2,j}=(B+a_j)/2$.
    \item We set $K=4B$.
\end{itemize}

Note that 
\begin{itemize}
    \item $\sum_{i=1}^2 {L_{i,x}}=\sum_{i=1}^2 {C_{i,x}}$ (as $d_{1,x}=d_{2,x}=0$);
    \item $\sum_{i=1}^2 {L_{i,y}}=\sum_{i=1}^2 {C_{i,y}}$ (as $d_{1,y}=d_{2,y}=0$); and
    \item $\sum_{i=1}^2 {L_{i,j}}=\sum_{i=1}^2 {C_{i,j}}- \sum_{i=1}^2 {d_{i,j}}=\sum_{i=1}^2 {C_{i,j}}-B-a_j$ for $j=1,\ldots,|A|$.
\end{itemize}
Therefore, asking if there exists a solution with $\sum_{i=1}^2 {L_{i,j}} \leq K=4B$ for $j \in \{x,y\}\bigcup \{1,\ldots,|A|\} $ is equivalent to asking if there is a solution with 
\begin{itemize}
    \item $\sum_{i=1}^2 {C_{i,j}} \leq K'_x=K'_y=4B$ for $j \in \{x,y\}$; and
    \item $\sum_{i=1}^2 {C_{i,j}} \leq K'_j=5B+a_j$ for $\{1,\ldots,|A|\}$.
\end{itemize}

\emph{Correctness.} We first show that the \textsc{Partition} instance is a yes-instance if and only if the constructed \LES instance is a yes-instance.

($\Rightarrow$): Assume that the given \textsc{Partition} instance is a yes-instance. Then, there exists a partition of $A$ into $A_1$ and $A_2$ such that $\sum_{a_j\in A_1}a_j=\sum_{a_j\in A_2}a_j=B$. Define $\hat{A}_{1}=\{j\mid a_j\in A_1\}$ and $\hat{A}_{2}=\{j\mid a_j\in A_2\}$. 
We schedule the jobs as follows. 
\begin{itemize}
    \item On day 1, we first schedule the job of client $y$, then we schedule the jobs of the clients in $\hat{A}_1$ in an arbitrary order followed by the job of client $x$. Afterwards, we schedule the jobs of the clients in $\hat{A}_2$ in an arbitrary order.
    \item On day 2, we first schedule the job of client $x$, then we schedule the jobs of the clients in $\hat{A}_2$ followed by the job of client $y$. Afterwards, we schedule the jobs of the clients in $\hat{A}_1$. The order in which we schedule the jobs of the clients in $\hat{A}_1$ and in $\hat{A}_2$ is opposite to their ordering on day 1.
\end{itemize}

Without loss of generality, renumber the clients such that $\hat{A}_1=\{1,\ldots,|A_1|\}$ and $\hat{A}_2=\{|A_1|+1,\ldots,|A|\}$. The schedule we use for the first day is
\[
\sigma_1=(y,1,2,\ldots,|A_1|-1,|A_1|,x,|A_1|+1,|A_1|+2,\ldots,|A|-1,|A|).
\] 
Therefore, the schedule for the second day is 
\[
\sigma_2=(x,|A|,|A|-1,\ldots,|A_1|+2,|A_1|+1,y,|A_1|,|A_1|-1,\ldots,2,1). 
\]
It follows that
\[
C_{1,x}=\sum_{j\in \hat{A}_1}{p_{1,j}}+p_{1,x}=\sum_{a_j\in A_1}{a_{j}}+p_{1,x}=B+3B=4B, 
\]
and that 
\[
C_{2,y}=\sum_{j\in \hat{A}_2}{p_{2,j}}+p_{2,y}=\sum_{a_j\in A_2}{a_{j}}+p_{2,y}=B+3B=4B.
\]
Since $C_{1,y}=p_{1,y}=C_{2,x}=p_{2,x}=0$, we have that 
\[
\sum_{i=1}^2 {C_{i,x}}=\sum_{i=1}^2 {C_{i,y}}=4B=K'_x=K'_y. 
\]
For $j=1,\ldots,|A_1|$, we have that
\[
C_{1,j}=\sum_{\ell=1}^j p_{1,j}=\sum_{\ell=1}^j a_j,
\]
and 
\[
C_{2,j}=5B-\sum_{\ell=1}^{j-1} p_{2,j}=5B-\sum_{\ell=1}^{j-1} a_{j}.
\]
Therefore,
\[
\sum_{i=1}^2 {C_{i,j}}=\sum_{i=1}^2 {C_{i,j}}=5B+a_j=K'_j.
\] 

For $j=|A_1|+1,\ldots,|A|$, we have that
\[
C_{1,j}=\sum_{\ell=1}^{|A_1|} p_{1,j}+3B+\sum_{\ell=|A_1|+1}^j p_{1,j}=\sum_{a_j\in{A_1}} a_j+3B+\sum_{\ell=|A_1|+1}^j s_j=4B+\sum_{\ell=|A_1|+1}^j a_{j}.
\]
and
\[
C_{2,j}=B-\sum_{\ell=|A_1|+1}^{j-1} p_{2,j}=B-\sum_{\ell=|A_1|+1}^{j-1} a_j.
\]
Therefore,
\[
\sum_{i=1}^2 {C_{i,j}}=\sum_{i=1}^2 {C_{i,j}}=5B+a_j=K'_j.
\] 
and we have a yes answer to the constructed instance of the \LES problem.

($\Leftarrow$): Assume that the constructed \LES instance is a yes-instance and we have a $K$-fair set of schedules $\mathcal{S}$. Without loss of generality, we may assume that the job of client $y$ is scheduled first on day 1, and the job of client $x$ is scheduled first on day 2. It follows that the completion time of job $y$ on day 1 and of job $x$ on day 2 is zero. Now, let $\hat{A}_1$ be the set of clients in which their jobs are scheduled before the job of client $x$ on day 1 (excluding the job of client $y$). Moreover, let $\hat{A}_2$ be the set of clients whose jobs are scheduled before the job of client $y$ on day 2 (excluding the job of client $x$). It follows that 
\[
\sum_{i=1}^2 {C_{i,x}}=C_{1,x}=\sum_{j \in{\hat{A}_1}} {p_{1,j}}+p_{1,x}= \sum_{j \in{\hat{A}_1}} a_{j}+3B\leq K'_x=4B, 
\]
and
\[
\sum_{i=1}^2 {C_{i,y}}=C_{2,y}=\sum_{j \in{\hat{A}_2}} {p_{2,j}}+p_{2,y}= \sum_{j \in{\hat{A}_2}} a_{j}+3B\leq K'_y=4B. 
\]
Therefore,
$\sum_{j \in{\hat{A}_1}} a_{j}\leq B$, and $\sum_{j \in{\hat{A}_2}} a_{j}\leq B$.

Now, let $\Bar{A}_1=\{1,\ldots,|A|\}\setminus A_1$ and $\Bar{A}_2=\{1,\ldots,|A|\}\setminus A_2$. We next prove that there is no client that belongs both to $\Bar{A}_1$ and $\Bar{A}_2$ (i.e., $\Bar{A}_1 \cap \Bar{A}_2 = \emptyset$). By contradiction, assume that there is a client $j$ that belongs to  $\Bar{A}_1$ and to $\Bar{A}_2$. It implies that $\sum_{i=1}^2 {C_{i,j}}>p_{1,x}+p_{2,y}=6B>K'_j$, a contradiction.

The fact $\sum_{j \in{\hat{A}_1}} a_{j}\leq B$ and $\sum_{j \in{\hat{A}_2}} a_{j}\leq B$ implies that $\sum_{j \in{\Bar{A}_1}} a_{j}\geq B$ and $\sum_{j \in{\Bar{A}_2}} a_{j}\geq B$. This, together with the fact that $\Bar{A}_1 \cap \Bar{A}_2 = \emptyset$ implies that $\hat{A}_1=\Bar{A}_2$; that $\hat{A}_2=\Bar{A}_1$ and that $\sum_{j \in{\hat{A}_1}} a_{j}= \sum_{j \in{\hat{A}_2}}  a_{j}= B$. Therefore, setting $A_1=\{a_j\mid j \in{\hat{A}_1}\}$ and $A_2=\{a_j \mid j \in{\hat{A}_2}\}$ yields a yes answer to the corresponding \textsc{Partition} instance.
\end{proof}


\section{The Price of Fairness}
\label{sec:POF}
In this section we show that one can use our model to also analyze the price of applying a fair policy in a repetitive scheduling environment. Recall that in our example given in Section~\ref{subsec:example}, the global total completion time was maximized under an optimal fair schedule. The natural question to ask is: what is the worst ratio between the optimal global policy and the optimal fair policy? 

We use the following notation. Let $I$ be an instance of some \threefield{\alpha}{\beta,\rep}{\max_j \sum_i Z_{i,j}} problem, and let $\text{FEAS}(I)$ denote all feasible solutions (\emph{i.e.,}\ all sets of $q$ feasible schedules) for $I$. For a solution $\mathcal{S} \in \text{FEAS}(I)$, we use $Z(\mathcal{S})=\sum_i \sum_j Z_{i,j}$ as the global value of $\mathcal{S}$ under objective~$Z$. Now, let $K=K(I)$ denote the minimum fairness threshold for instance $I$, \emph{i.e.,}\ the minimum $K > 0$ such that there exists a schedule for $I$ with $\max_j \sum_i Z_{i,j} \leq K$. We use  $\text{FAIR}(I) \subseteq \text{FEAS}(I)$ to denote all feasible $K$-fair schedules of $I$. In the spirit of~\cite{BFT11}, we define the price of fairness of a repetitive scheduling problem as follows:

\begin{definition}
\label{def:POF}
The price of fairness $PoF(\Pi)$ of a \threefield{\alpha}{\beta,\rep}{\max_j {Z_{j}}} problem $\Pi$ is $\delta$, if $\delta$ is the minimum value that satisfies 
\[
\frac{\max_{\mathcal{S} \in \text{FAIR}(I)} Z(\mathcal{S})}{\min_{\mathcal{S} \in \text{FEAS}(I)} Z(\mathcal{S})} 	\leq \delta
\]
over all possible instances $I$ of $\Pi$. 
\end{definition}

Thus, the smaller the price of fairness of a problem is, the better a fair solution would be in the global sense. Note that the above definition assumes that $Z$ is a \emph{minimization criterion}; however, it is easy to alter it to apply for maximization criteria as well. Furthermore, note that the above define assumes the scheduler is oblivious to the global objective value. Thus, we consider the \emph{worst} (globally speaking) $K$-fair solution in the Definition~\ref{def:POF}. It is also natural to assume that the scheduler is mindful to the global objective value (\emph{e.g.} in case he has unlimited computational power), and thus to consider the best $K$-fair solution in the definition above. This is somewhat analogous to difference between the \emph{price of anarchy} and \emph{price of stability} in game theory~\cite{AGTBook}, where one can consider either the worst or best equilibrium in terms of global utility.

We next analyze the price of fairness of the \CES problem. In particular, we determine the price of fairness of \CES up to a factor of $2+\varepsilon$, for any~$\varepsilon > 0$.
\begin{theorem}
\label{thm:PoF}
For any $\varepsilon >0$, we have 
\[
n/(2+\varepsilon) \,\leq\, \textrm{PoF(\CES)} \, \leq \,  n.
\]
\end{theorem}

It is well known that processing jobs in an non-decreasing order of processing times (SPT order) minimizes the total completion time of a single day \threefield{1}{}{\sum C_j} instance. Thus, a feasible solution~$\mathcal{S}^\text{SPT}$ that minimizes the global total completion time of a \CES instance is a set of $q$ schedules, $\sigma^\text{SPT}_1, \ldots, \sigma^\text{SPT}_q$, where each $\sigma^\text{SPT}_i$ schedules the client jobs according to their SPT order on day~$i$. 

Let us focus on an arbitrary day $i\in\{1,\ldots,q\}$. Assume for simplicity that the clients are ordered in non-decreasing order of their job processing time on day~$i$, \emph{i.e.}\ $p_{i,1} \leq p_{i,2} \leq \ldots p_{i,n}$. Then the total competition time $C(\sigma^\text{SPT}_i) = \sum_j C_{i,j}$ of these jobs under $\sigma_i$ is
\[
C(\sigma^\text{SPT}_i) \,=\, \sum^n_{\ell=1} \ell \cdot p_{i,n+\ell-1}
\,=\, np_{i,1}+(n-1)p_{i,2}+\cdots+2p_{i,n-1}+p_{i,n}.
\]
Moreover, the worst any feasible solution can do on day $i$ is to schedule the clients jobs in an non-increasing order of processing times (LPT order). The total completion time of such a schedule~$\sigma^\text{LPT}_i$ would then be
\[
C(\sigma^\text{LPT}_i) \,=\, \sum^n_{\ell=1} \ell \cdot p_{i,\ell}
\,=\, p_{i,1}+2p_{i,2}+\cdots+(n-1)p_{i,n-1}+np_{i,n}.
\]
From the two observations above, we can get a trivial upper bound for the price of fairness of \CES.

\begin{lemma}
\label{lem:POFupperbound}
The price of fairness of \CES is at most $n$.    
\end{lemma}

\begin{proof}
Consider some arbitrary instance $I$ of \CES, and let $\mathcal{S}^\text{SPT}=\{\sigma^\text{SPT}_1,\ldots,\sigma^\text{SPT}_q\}$ and $\mathcal{S}^\text{LPT}=\{\sigma^\text{LPT}_1,\ldots,\sigma^\text{LPT}_q\}$ be the corresponding SPT and LPT solutions for $I$ as discussed above. Then we have 
\[
\frac{\max_{\mathcal{S} \in \text{FAIR}(I)} C(\mathcal{S})}{\min_{\mathcal{S} \in \text{FEAS}(I)} C(\mathcal{S})} 	\leq \frac{\max_{\mathcal{S} \in \text{FEAS}(I)} C(\mathcal{S})}{\min_{\mathcal{S} \in \text{FEAS}(I)} C(\mathcal{S})} =  \frac{C(\mathcal{S}^\text{LPT})}{C(\mathcal{S}^\text{SPT})} \leq \max_i \frac{ C(\sigma^\text{LPT}_i)}{C(\sigma^\text{SPT}_i)}.
\]    
Consider the day $i \in \{1,\ldots,q\}$ that maximizes the ratio above, and let $p_{i,1} \leq p_{i,2} \leq \cdots \leq p_{i,n}$ be the processing times of the client jobs on this day. Then we have: 
\[
\frac{C(\sigma^\text{LPT}_i)}{C(\sigma^\text{SPT}_i)} 	= 
\frac{\sum^n_{\ell=1} \ell \cdot p_{i,n-\ell+1}}{\sum^n_{\ell=1} \ell \cdot p_{i,\ell}} \leq \frac{n \cdot \sum^n_{\ell=1} p_{i,\ell}}{\sum^n_{\ell=1} p_{i,\ell}} = n,
\]    
and so the lemma follows.
\end{proof}

Thus, the upper bound of Theorem~\ref{thm:PoF} is given in Lemma~\ref{lem:POFupperbound} above. To complete the proof of the theorem, we prove a corresponding lower bound in Lemma~\ref{lem:POFlowerbound} below. 

\begin{lemma}
\label{lem:POFlowerbound}
The price of fairness of \CES is at least $n/(2+\varepsilon)$, for any $\varepsilon > 0$.     
\end{lemma}

\begin{proof}
Let $p:=p(\varepsilon,n)$ be the smallest integer such that $\varepsilon p \geq (n+2)(n-1)$. We construct a two day instance ($q=2$) of \CES with $n$ clients, where
\begin{itemize}
\item $p_{1,j}=p_{2,j}=1$ for $j \in \{1,\ldots,n-1\}$, and 
\item $p_{1,n}=p_{2,n}=p$.
\end{itemize}

Clearly the optimal global competition time schedule is the solution $\mathcal{S}^\text{FEAS}=\{\sigma^\text{SPT}_1,\sigma^\text{SPT}_2\}$ which schedules all client jobs in an SPT order on both days. On the other hand, considering the  algorithm of Theorem~\ref{thm:2days} on the instance above, we can observe that this algorithm will compute a solution $\mathcal{S}^\text{FAIR}=\{\sigma^\text{SPT}_1,\sigma^\text{LPT}_2\}$ which schedules all client jobs in an SPT order on the first day, and all client jobs in an LPT order on the second day. Note that the total completion of all jobs in a single day is $n(n-1)/2+(n-1)+p$ when scheduled according to the SPT order, while it is $n(n-1)/2+np$ when they are scheduled according to the LPT order. Thus, we have:
\[
\frac{C(\mathcal{S}^\text{FAIR})}{C(\mathcal{S}^\text{FEAS})} = 
\frac{C(\sigma^\text{SPT}_1)+C(\sigma^\text{LPT}_2)}{C(\sigma^\text{SPT}_1) + C(\sigma^\text{SPT}_2)} = 
\frac{(n+1)(n-1+p)}{(n+2)(n-1) + 2p} \geq \frac{np}{(2+\varepsilon)p} = \frac{n}{2+\varepsilon},
\]    
and so the lower bound of the theorem follows as well.
\end{proof}

\section{Summary and Future Research}
\label{sec:conclusion}%

\newcommand{\mrrb}[2]{\multirow{#1}{*}{\rotatebox[origin=c]{0}{#2}}}
\renewcommand{\arraystretch}{1.5}
\newcolumntype{Y}{>{\centering\arraybackslash}X}
\begin{table}[t]
  \setlength{\tabcolsep}{8pt}
  \centering
  \caption
  {
    Summary of our computational complexity results for the three repetitive scheduling problems \CES, \WES, and \LES.
  }
  \def\fbs{3.5cm}
  \begin{tabularx}{\textwidth}{@{}c|Y|Y|Y@{}}%
 \hline
      & \multicolumn{3}{c}{Problem Variant}\\\cline{2-4} 
      Parameter & \CES 	&  \WES	& \LES	\\
      \hline
      \mrrb{3}{$q$} &\multicolumn{2}{c|}{weakly NP-h for $q=4$} & \mrrb{2}{weakly NP-h for $q=2$} \\
      & P-time for $q=2$ & \emph{open} &\\
       & \multicolumn{3}{c}{strongly W[1]-hard}\\
       \hline
             \mrrb{2}{$n$} &\multicolumn{3}{c}{weakly NP-h for $n=2$} \\
       & \multicolumn{3}{c}{P-time for $n=O(1)$ and unary encoding}\\
       \hline
       \mrrb{1}{$K$} &\multicolumn{3}{c}{strongly NP-h for $K=O(1)$} \\
       \hline
       \mrrb{1}{$q+K$} &\multicolumn{2}{c|}{fixed-parameter tractable} & \emph{open} \\
  \hline
  \end{tabularx}
  \label{tab:results}
\end{table}

We present a new and quite natural framework to evaluate the fairness of scheduling decisions in a class of repetitive scheduling problems, extending earlier work we presented in~\cite{BKN21}. Our framework fits a set of problems involving $n$ clients, each of which assigns a single job in each of a set of $q$ independent periods (\emph{e.g.}, days). The objective is to provide a set of $q$ schedules, one for each period, such that the poorest quality of service (QoS) received by any client is minimized. The QoS for clients may have different interpretations, and is calculated separately for each client with respect to its own jobs. 

We focused on three basic single machine repetitive scheduling problems, \CES, \WES, and \LES. While our results summarized in \cref{tab:results} provide a good picture of the computational complexity of these problems, there are still some open questions remaining:
\begin{itemize}
\item What is the computational complexity of \CES for $q=3$? What is the complexity of \WES for $q\in\{2,3\}$?
\item Do any of the three problems admit a polynomial algorithm when all processing times are encoded in unary and $q=O(1)$? 
\item Do any of the three problems admit an FPT algorithm parameterized by $n$ when all processing times are encoded in unary? 
\item Does \LES admit an FPT algorithm for the combination of $q$ and $K$ as a parameter?
\end{itemize}

Of course, a more interesting and fruitful research direction would be to find further applications of repetitive scheduling problems to be considered. Natural directions include, but are not limited to: 
\begin{itemize}
\item Considering other machine models such as $m$ machines in parallel, or flow-shop environments. 
\item Exploring other objective functions such as total tardiness $\sum_{i,j} T_{i,j} = \sum_{i,j} \max\{0,L_{i,j}\}$.
\item Studying additional job constraints such as precedence constraints or release times. 
\item Designing approximation algorithms that approximate the minimum possible fairness threshold.
\end{itemize}

To conclude, we believe that our framework provides a fertile field for future research on fairness in repetitive scheduling. We hope that our paper, at the very least, will spark further discussion and research into fairness within the scheduling community, and perhaps will be a first step into understanding this topic. 

\bibliographystyle{abbrvnat}
\bibliography{bibliography}



\end{document}